\newcommand{\sign}{\mathrm{sign}}
\newtheorem{theorem}{Theorem}
\newtheorem{lemma}{Lemma}
\newtheorem{proposition}{Proposition}
\newcommand{\affiliation}[1]{\address{#1}}
\newcommand{\keywords}[1]{\vspace{10mm}\noindent\textbf{Keywords:} #1}
\newcommand{\openone}{\mbox{{\small$\mathrm{1}$}$\!\!\mathrm{1}$}}
\newcommand{\text}[1]{\mathrm{#1}}
\newcommand{\substack}[1]{\tiny\begin{array}{c}#1\end{array}}
\begin{document}

\title{The propagator of the attractive delta-Bose gas in one dimension}
\date{\today}
\author{Sylvain Prolhac\footnote{prolhac@ma.tum.de}{} and Herbert Spohn\footnote{spohn@ma.tum.de}{}}
\affiliation{Zentrum Mathematik and Physik Department,\\
Technische Universit\"at M\"unchen,\\
D-85747 Garching, Germany}

\begin{abstract}
We consider the quantum $\delta$-Bose gas on the infinite line. For repulsive interactions, Tracy and Widom have obtained an exact formula for the quantum propagator. In our contribution we explicitly perform its analytic continuation to attractive interactions. We also study the connection to the expansion of the propagator in terms of the Bethe ansatz eigenfunctions. Thereby we provide an independent proof of their completeness.
\end{abstract}

\pacs{02.30.Ik 05.30.Jp}
\keywords{delta-Bose gas, quantum propagator, Bethe ansatz, completeness, analytic continuation.}

\maketitle


\section{Introduction}
\label{Section introduction}
\setcounter{equation}{0}
Quantum particles on the real line interacting through a $\delta$-potential are governed by the Hamiltonian
\begin{equation}
\label{H kappa}
H_{\kappa}=-\sum_{j=1}^{n}\frac{\partial^{2}}{\partial x_{j}^{2}}-2\kappa\sum_{j<k}^{n}\delta(x_{j}-x_{k})\;.
\end{equation}
The number of particles, $n$, is fixed throughout and $x=(x_{1},\ldots,x_{n})$ denotes the positions of the particles. We will restrict ourselves to the bosonic subspace of symmetric wave functions. Eq. (\ref{H kappa}) is the Hamiltonian of a quantum many-body system which can be analyzed through the Bethe ansatz. The repulsive interaction, $\kappa<0$, has been studied in great detail and we refer to \cite{LL63.1,L63.1,YY69.1,G71.1,G71.2,G71.3,G83.1}. The attractive case, $\kappa>0$, has received less attention. One reason is that the structure of the Bethe equations is more complicated. On top, physical applications are not obviously in reach. In the recent years, there has been renewed interest. We have now available a detailed study of the eigenfunctions \cite{MG64.1,KK88.1,CC07.2,D10.2} and, as argued by Calabrese and Caux \cite{CC07.1}, applications to real materials are in sight. A further motivation comes from the one-dimensional Kadar-Parisi-Zhang (KPZ) equation \cite{KPZ86.1}. Its replica solution is given in terms of the propagator of the attractive $\delta$-Bose gas \cite{K87.1,BO90.1} which can be used to obtain exact solutions for some special initial conditions \cite{DK10.1,CLDR10.1,D10.1,D10.2,PS11.1,PS11.2,PS11.3,CLD11.1,IS11.1,CQ11.1,C11.1}.

In the KPZ context, and also in other cases, one is actually interested in the quantum propagator $\langle x|\rme^{-tH_{\kappa}}|y\rangle$, $t\geq0$. In principle, $\rme^{-tH_{\kappa}}$ can be expanded in a sum (integral) over eigenfunctions. But one might hope to have at disposal more concise expressions for the propagator. In the repulsive case, Tracy and Widom \cite{TW08.2} carried out such a program. The resulting expression we refer to as TW formula, which will be discussed below, including its relation to the expansion in eigenfunctions. A natural issue is to extend such a program to the attractive case, which is the topic of our contribution.

By symmetry the propagator $\langle x|\rme^{-tH_{\kappa}}|y\rangle$ can be restricted to the domain $\Lambda=\{x|x_{1}\leq\ldots\leq x_{n}\}\subset\mathbb{R}^{n}$. Using the Bose symmetry, $H_{\kappa}$ of (\ref{H kappa}) is then defined by
\begin{equation}
H_{\kappa}\psi(x_{1},\ldots,x_{n})=-\sum_{j=1}^{n}\frac{\partial^{2}}{\partial x_{j}^{2}}\psi(x_{1},\ldots,x_{n})\;,\quad x\in\Lambda^{\circ}\;,
\end{equation}
with the boundary conditions
\begin{equation}
\left(\frac{\partial}{\partial x_{j+1}}-\frac{\partial}{\partial x_{j}}+\kappa\right)\psi(x_{1},\ldots,x_{n})_{\big|x_{j+1}=x_{j}}=0\;,
\end{equation}
where the limit $x_{j+1}=x_{j}$ is taken from the interior, $\Lambda^{\circ}$, of $\Lambda$. The Hamiltonian $H_{\kappa}$ is a self-adjoint operator and $\langle x|\rme^{-tH_{\kappa}}|y\rangle$ is continuous in $x,y\in\Lambda$. In particular,
\begin{equation}
\lim_{t\to0}\langle x|\rme^{-tH_{\kappa}}|y\rangle=\frac{1}{n!}\prod_{j=1}^{n}\delta(x_{j}-y_{j})\;,\quad x,y\in\Lambda\;.
\end{equation}
Throughout the paper $x,y\in\Lambda$, hence the position of the particles are ordered increasingly. As will be proved in \ref{Appendix analyticity propagator}, $\langle x|\rme^{-tH_{\kappa}}|y\rangle$ is analytic in $\kappa$ for otherwise fixed arguments. Thus our strategy will be to analytically extend the TW formula, valid for $\kappa\leq0$, to $\kappa>0$. As written, the TW formula becomes singular at $\kappa=0$. Therefore the main task is to understand the structure of the analytic continuation in $\kappa$. As a result, we will arrive at various formulas for the propagator. One formula will be just the expansion in Bethe ansatz eigenfunctions, which thus implies their completeness. 

The issue of completeness for the attractive $\delta$-Bose gas on the line has been studied before. In his thesis, Stephen Oxford \cite{Ox79} proves completeness of the generalized eigenfunctions defined as bounded Bethe ansatz eigenfunctions. He uses functional analytic methods to construct the Hilbert space isometry from the generalized eigenfunctions and thereby the spectral representation of $H_\kappa$. A similar strategy is used by Babbitt and Thomas \cite{BT77} for the ground state representation of the ferromagnetic Heisenberg model on $\mathbb{Z}$. Heckman and Opdam \cite{HO97} exploit the fact that the $\delta$-Bose gas turns up in the representation theory of graded Hecke algebras. (We are grateful to Bal\'{a}zs Pozsgay for pointing out this reference.) They have results for the case when the interaction strength is allowed to be pair dependent. But only for $H_\kappa$ their expression simplifies and they arrive at a Plancherel formula, which is the completeness relation.

For the system on the line, studied here, the set of admissible wave numbers is known explicitly. For a bounded system, in particular with periodic boundary conditions, the discrete set of wave numbers are the solutions to the Bethe equations, a coupled system of $n$ transcendental equations. Completeness becomes more difficult to establish and to our knowledge only for the repulsive case 
a completeness proof is available \cite{D93.1}.

The article is organized as follows. In Section \ref{Section repulsive}, we recall the Tracy and Widom formula for the propagator in the repulsive case, and rewrite it in terms of Bethe eigenstates. In Section \ref{Section attractive}, we summarize our main results on the propagator with attractive interactions. These results are proved in Section \ref{Section analytic continuation}, by performing explicitly the analytic continuation to $\kappa>0$. A further rewriting represents the propagator in terms of the known Bethe eigenstates. The special case of the propagator with all particles starting and ending at $0$ is handled in Section \ref{Section x=y=0}. In \ref{Appendix analyticity propagator}, we prove that the (imaginary time) propagator is an analytic function of the coupling.


\section{\texorpdfstring{$\delta$}{delta}-Bose gas with repulsive interaction \texorpdfstring{($\kappa<0$)}{kappa<0}}
\label{Section repulsive}
\setcounter{equation}{0}
Let $S_{n}$ be the set of all $n!$ permutations of the integers between $1$ and $n$. In the following, we use the notations
\begin{equation}
\prod_{j<k}^{n}\equiv\prod_{1\leq j<k\leq n}\;.
\end{equation}
For $\kappa<0$ the TW formula states (in the notation of \cite{TW08.2} the strength of the potential is called $c=-\kappa$)
\begin{eqnarray}
\label{phi(kappa<0)}
&& \langle x|\rme^{-tH_{\kappa}}|y\rangle=\frac{1}{n!(2\pi)^{n}}\int_{\mathbb{R}^{n}}\rmd q_{1}\,\ldots\,\rmd q_{n}\,\nonumber\\
&& \hspace{20mm}\sum_{\sigma\in S_{n}}\!\!\!\!\!\prod_{\substack{j<k\\\sigma(j)>\sigma(k)}}^{n}\!\!\!\!\!\frac{q_{\sigma(j)}-q_{\sigma(k)}-\rmi\kappa}{q_{\sigma(j)}-q_{\sigma(k)}+\rmi\kappa}\;\prod_{j=1}^{n}\left(\rme^{\rmi q_{\sigma(j)}(x_{j}-y_{\sigma(j)})}\rme^{-tq_{j}^{2}}\right)\;.
\end{eqnarray}
The connection to the eigenfunction expansion can be seen by symmetrizing over all permutations of the $q_{j}$. We introduce a new permutation $\tau\in S_{n}$ and replace $q_{j}$ by $q_{\tau(j)}$. Replacing then $\sigma$ by $\tau^{-1}\circ\sigma$, one finds
\begin{eqnarray}
\label{phi(kappa<0) symmetrized}
&& \langle x|\rme^{-tH_{\kappa}}|y\rangle=\frac{1}{n!^{2}(2\pi)^{n}}\sum_{\sigma,\tau\in S_{n}}\int_{\mathbb{R}^{n}}\rmd q_{1}\,\ldots\,\rmd q_{n}\,\nonumber\\
&&\hspace{20mm} \prod_{\substack{j,k=1\\\tau^{-1}(j)<\tau^{-1}(k)\\\sigma^{-1}(j)>\sigma^{-1}(k)}}^{n}\!\!\!\!\!\frac{q_{j}-q_{k}+\rmi\kappa}{q_{j}-q_{k}-\rmi\kappa}\;\prod_{j=1}^{n}\left(\rme^{\rmi q_{j}(x_{\sigma^{-1}(j)}-y_{\tau^{-1}(j)})}\rme^{-tq_{j}^{2}}\right)\;.
\end{eqnarray}
One notes the factorization
\begin{eqnarray}
\label{factorization kappa<0}
&&\fl\hspace{10mm} \prod_{\substack{j,k=1\\\tau^{-1}(j)<\tau^{-1}(k)\\\sigma^{-1}(j)>\sigma^{-1}(k)}}^{n}\!\!\!\!\!\frac{q_{j}-q_{k}+\rmi\kappa}{q_{j}-q_{k}-\rmi\kappa}\;=\!\!\!\!\!\!\prod_{\substack{j<k\\\tau^{-1}(j)<\tau^{-1}(k)\\\sigma^{-1}(j)>\sigma^{-1}(k)}}^{n}\!\!\!\!\!\frac{q_{j}-q_{k}+\rmi\kappa}{q_{j}-q_{k}-\rmi\kappa}\!\!\!\!\!\prod_{\substack{j<k\\\tau^{-1}(j)>\tau^{-1}(k)\\\sigma^{-1}(j)<\sigma^{-1}(k)}}^{n}\!\!\!\!\!\frac{q_{j}-q_{k}-\rmi\kappa}{q_{j}-q_{k}+\rmi\kappa}\nonumber\\
&&\fl\hspace{54.5mm} =\!\!\!\!\!\!\prod_{\substack{j<k\\\sigma^{-1}(j)>\sigma^{-1}(k)}}^{n}\!\!\!\!\!\frac{q_{j}-q_{k}+\rmi\kappa}{q_{j}-q_{k}-\rmi\kappa}\!\!\!\!\!\prod_{\substack{j<k\\\tau^{-1}(j)>\tau^{-1}(k)}}^{n}\!\!\!\!\!\frac{q_{j}-q_{k}-\rmi\kappa}{q_{j}-q_{k}+\rmi\kappa}\;.
\end{eqnarray}
We introduce the Bethe eigenstates of the Hamiltonian (\ref{H kappa}) with repulsive interaction
\begin{equation}
\psi(x;q)=\frac{1}{n!}\sum_{\sigma\in S_{n}}\!\!\!\!\!\prod_{\substack{j<k\\\sigma^{-1}(j)>\sigma^{-1}(k)}}^{n}\!\!\!\!\!\frac{q_{j}-q_{k}+\rmi\kappa}{q_{j}-q_{k}-\rmi\kappa}\;\prod_{j=1}^{n}\rme^{\rmi q_{j}x_{\sigma^{-1}(j)}}\;,
\end{equation}
with momenta $q=(q_{1},\ldots,q_{n})\in\mathbb{R}^{n}$. The eigenstate $\psi(x;q)$ has energy
\begin{equation}
E(q)=\sum_{j=1}^{n}q_{j}^{2}\;.
\end{equation}
In terms of Bethe eigenstates, Eq. (\ref{phi(kappa<0) symmetrized}), (\ref{factorization kappa<0}) rewrite as
\begin{equation}
\label{phi(kappa<0) Bethe}
\langle x|\rme^{-tH_{\kappa}}|y\rangle=\frac{1}{(2\pi)^{n}}\int_{\mathbb{R}^{n}}\rmd q_{1}\,\ldots\,\rmd q_{n}\,\psi(x;q)\,\overline{\psi(y;q)}\;\rme^{-tE(q)}\;,
\end{equation}
where $\overline{(\ldots)}$ denotes complex conjugation. At $t=0$, (\ref{phi(kappa<0) Bethe}) reduces to the completeness relation for the Bethe eigenstates,
\begin{equation}
\openone=\frac{1}{(2\pi)^{n}}\int_{\mathbb{R}^{n}}\rmd q_{1}\,\ldots\,\rmd q_{n}\,|\psi(q)\rangle\,\langle\psi(q)|\;.
\end{equation}
A proof of the orthonormality of the Bethe eigenstates can be found \textit{e.g.} in \cite{D10.2}, Appendix A.


\section{\texorpdfstring{$\delta$}{delta}-Bose gas with attractive interaction \texorpdfstring{($\kappa>0$)}{kappa>0}}
\label{Section attractive}
\setcounter{equation}{0}
As shown in \ref{Appendix analyticity propagator}, the propagator is an analytic function of the coupling $\kappa$ for $t>0$. By analytic continuation of (\ref{phi(kappa<0)}) from $\kappa<0$ to $\kappa>0$, we will derive in Section \ref{Section attractive} an exact expression for the propagator in the attractive case $\kappa>0$.

Before stating the two main theorems, a few definitions are needed. We call $D_{n,M}$ the set of the $M$-tuples $\vec{n}=(n_{1},\ldots,n_{M})$ such that $n_{j}\geq1$, $j=1,\ldots,M$ and $n_{1}+\ldots+n_{M}=n$. For $\vec{n}\in D_{n,M}$, the \textit{clusters} $\Omega_{j}(\vec{n})\equiv\Omega_{j}$, $j=1,\ldots,M$ are defined by
\begin{equation}
\label{Omegaj}
\Omega_{j}=\{n_{1}+\ldots+n_{j-1}+1,\ldots,n_{1}+\ldots+n_{j}\}\;.
\end{equation}
From the Bethe ansatz point of view, the clusters will correspond to bound states of the particles. The function $r_{\vec{n}}\equiv r$, acting on $\{1,\ldots,n\}$, is defined by
\begin{equation}
\label{rn(a)}
r(a)=s \quad\text{for}\quad a=n_{1}+\ldots+n_{j-1}+s\in\Omega_{j}\;.
\end{equation}
More visually, one has
\begin{equation}
\fl\hspace{10mm} \begin{array}{r|ccc|ccc|c|ccc|}
a & 1 & \ldots & n_{1} & n_{1}+1 & \ldots & n_{1}+n_{2} & \ldots & n-n_{M}+1 & \ldots & n\\\hline
\text{Cluster} & & \Omega_{1} & & & \Omega_{2} & & \ldots & & \Omega_{M} & \\
r(a) & 1 & \ldots & n_{1} & 1 & \ldots & n_{2} & \ldots & 1 & \ldots & n_{M}
\end{array}\;.
\end{equation}
Finally, we call $S_{n}'(\vec{n})$ (respectively $S_{n}''(\vec{n})$) the subset of $S_{n}$ containing only the permutations $\sigma$ (resp. $\tau$) such that for all $j=1,\ldots,M$ and $a,b\in\Omega_{j}$ with $a<b$ one has $\sigma^{-1}(a)<\sigma^{-1}(b)$ (resp. $\tau^{-1}(a)>\tau^{-1}(b)$).

In the attractive case, the following expression for the propagator is proved in Section \ref{Section analytic continuation}.
\begin{theorem}
\label{Theorem propagator 1}
For fixed $\vec{n}$, let $\mu_{j}$, $j=1,\ldots,M$, be arbitrary real numbers satisfying the constraint
\begin{equation}
-n_{j}<\mu_{j}\leq0\;.
\end{equation}
For $\kappa>0$ and $x,y\in\Lambda$, one has
\begin{eqnarray}
\label{phi(kappa>0)}
&&\fl \langle x|\rme^{-tH_{\kappa}}|y\rangle=\sum_{M=1}^{n}\frac{\kappa^{n-M}}{n!M!(2\pi)^{M}}\sum_{\vec{n}\in D_{n,M}}\prod_{j=1}^{M}(n_{j}!(n_{j}-1)!)\int_{\mathbb{R}^{M}}\rmd q_{1}\,\ldots\,\rmd q_{M}\,\nonumber\\
&&\fl\hspace{20.5mm} \sum_{\sigma\in S_{n}'(\vec{n})}\sum_{\tau\in S_{n}''(\vec{n})}\prod_{j=1}^{M}\prod_{a\in\Omega_{j}}\left(\rme^{\rmi(q_{j}+\rmi\kappa(\mu_{j}+r(a)-1))(x_{\sigma^{-1}(a)}-y_{\tau^{-1}(a)})}\rme^{-t(q_{j}+\rmi\kappa(\mu_{j}+r(a)-1))^{2}}\right)\nonumber\\
&&\fl\hspace{20mm} \times\prod_{\substack{j,k=1\\j\neq k}}^{M}\!\!\!\!\!\!\!\!\prod_{\substack{a\in\Omega_{j}\\b\in\Omega_{k}\\\sigma^{-1}(a)>\sigma^{-1}(b)\\\tau^{-1}(a)<\tau^{-1}(b)}}\!\!\!\!\!\!\!\!\left(\frac{(q_{j}+\rmi\kappa(\mu_{j}+r(a)))-(q_{k}+\rmi\kappa(\mu_{k}+r(b)))+\rmi\kappa}{(q_{j}+\rmi\kappa(\mu_{j}+r(a)))-(q_{k}+\rmi\kappa(\mu_{k}+r(b)))-\rmi\kappa}\right)\;.
\end{eqnarray}
All the apparent poles in the integrand cancel except for simple poles at $q_{j}+\rmi\kappa\mu_{j}=q_{k}+\rmi\kappa(\mu_{k}+n_{k})$ and $q_{j}+\rmi\kappa(\mu_{j}+n_{j})=q_{k}+\rmi\kappa\mu_{k}$, $j<k$. The integrand vanishes at $q_{j}+\rmi\kappa\mu_{j}=q_{k}+\rmi\kappa\mu_{k}$ and $q_{j}+\rmi\kappa(\mu_{j}+n_{j})=q_{k}+\rmi\kappa(\mu_{k}+n_{k})$.
\end{theorem}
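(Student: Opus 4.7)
The plan is to analytically continue the Tracy--Widom formula~(\ref{phi(kappa<0)}) from $\kappa<0$ to $\kappa>0$ directly, by deforming the integration contours in $\mathbb{C}^{n}$ and collecting the residues of the integrand at the poles that cross the real axis as $\kappa$ changes sign. For $\kappa<0$, each denominator $q_{\sigma(j)}-q_{\sigma(k)}+\rmi\kappa$ vanishes only on a complex hyperplane bounded away from $\mathbb{R}^{n}$; for $\kappa>0$ those poles would land on the integration contour, so the continuation is nontrivial and generates the cluster structure of~(\ref{phi(kappa>0)}).

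First I would work with the symmetrized expression~(\ref{phi(kappa<0) symmetrized})--(\ref{factorization kappa<0}) and translate each variable $q_{j}\mapsto q_{j}-\rmi\kappa\nu_{j}$ by some real parameters $\nu_{j}$, harmlessly for $\kappa<0$. As $\kappa$ is rotated through zero, a factor $(q_{j}-q_{k}+\rmi\kappa)/(q_{j}-q_{k}-\rmi\kappa)$ supplies a pole at $q_{j}=q_{k}-\rmi\kappa$; taking this residue produces a reduced integrand in which a further factor can line up with the new contour, enabling a second residue at $q_{k}=q_{\ell}-\rmi\kappa$, and so on. Iterating the cascade yields, for each ordered partition $\vec{n}\in D_{n,M}$, a group of indices $a\in\Omega_{j}$ whose momenta collapse onto a Bethe string $q_{j}+\rmi\kappa(\mu_{j}+r(a)-1)$ with a single free real variable $q_{j}$ per cluster. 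The overall power $\kappa^{n-M}$ bookkeeps the $n-M$ residues extracted.

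The pair-of-permutations sum decomposes accordingly: the restrictions $\sigma\in S_{n}'(\vec{n})$ and $\tau\in S_{n}''(\vec{n})$ arise because only these orderings allow the within-cluster factors to be in position to pinch during the contour move, and the $j\neq k$ cross-cluster product in~(\ref{phi(kappa>0)}) is what remains of the factorization~(\ref{factorization kappa<0}) after all within-cluster factors have been consumed as residues. The combinatorial prefactor $\prod_{j}n_{j}!(n_{j}-1)!$ together with $1/(n!\,M!\,(2\pi)^{M})$ should emerge from counting: $1/n!$ directly from~(\ref{phi(kappa<0) symmetrized}), $1/M!$ from the arbitrariness in ordering the $M$ clusters, $(n_{j}-1)!$ from orderings of the cascade inside each cluster, and an additional $n_{j}!$ from the within-cluster labelings of $\sigma,\tau$ that give identical residues.

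The main obstacle, which I expect to be the most delicate calculation, is to verify that the iterated residue cascade closes up into exactly the advertised formula and that all spurious poles generated by the sums over $\sigma,\tau$ in~(\ref{phi(kappa>0)}) cancel, leaving only the simple poles at $q_{j}+\rmi\kappa\mu_{j}=q_{k}+\rmi\kappa(\mu_{k}+n_{k})$ and $q_{j}+\rmi\kappa(\mu_{j}+n_{j})=q_{k}+\rmi\kappa\mu_{k}$ for $j<k$. This amounts to an algebraic identity on the product of S-matrix factors: when two would-be poles are approached, a telescoping in the within-cluster indices must supply a compensating zero in the numerator. Once this cancellation is in hand, independence of the integral on $\mu_{j}\in(-n_{j},0]$ follows from Cauchy's theorem in the resulting analyticity strip, and I would close with an internal consistency check that the $M=n$ summand (only singletons) reproduces the original formula~(\ref{phi(kappa<0)}).
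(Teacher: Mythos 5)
Your overall strategy---analytic continuation of the Tracy--Widom formula by deforming contours and collecting residues, with the residue cascade generating the Bethe strings $q_{j}+\rmi\kappa(\mu_{j}+r(a)-1)$ and the cluster sum over $D_{n,M}$---is exactly the route the paper takes. However, as written the proposal has a genuine gap: the two facts you defer as ``the most delicate calculation'' are not side issues to be checked at the end; they constitute essentially the entire content of the proof. First, the residue cascade does \emph{not} automatically ``close up'' into strings: iterating $j\to k$ produces a sum over arbitrary forests (a new variable can attach to any element of an existing cluster, not just its end), and the branched configurations must be shown to cancel. In the paper this is Lemma \ref{Lemma analytic continuation}, proved by induction on which contours have been moved; the cancellation of the residue at a non-minimal attachment point $c$ is obtained by pairing each permutation $\sigma$ with $\sigma\circ\theta_{\ell,c'}$ (exchange of $\ell$ with the next element $c'$ of the cluster), under which the two surviving residue factors (\ref{residue factor 3}) and (\ref{residue factor 4}) are equal and opposite. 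Without this pairing argument you do not even obtain a sum over partitions, let alone the stated formula.

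Second, the claim about the pole structure---and with it the freedom to place the contours at arbitrary $\mu_{j}\in(-n_{j},0]$, which is part of the theorem's assertion---requires a separate argument. The paper proves that $\varphi_{\kappa}(x;M,\vec{n},q)$ multiplied by the telescoping product $\prod_{j<k}\prod_{r,s}\frac{q_{j}-q_{k}+\rmi\kappa(r-s-1)}{q_{j}-q_{k}+\rmi\kappa(r-s)}$ is entire (by passing to variables $\xi_{a}$, extending the sum to all of $S_{n}$, and cancelling residues at $\xi_{c}=\xi_{d}$ pairwise under a transposition), and then evaluates the telescoping product explicitly to exhibit the only surviving simple poles at $q_{j}-q_{k}=\rmi\kappa n_{k}$ and $q_{j}-q_{k}=-\rmi\kappa n_{j}$ and the zeros at $q_{j}=q_{k}$ and $q_{j}-q_{k}=\rmi\kappa(n_{k}-n_{j})$. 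Your ``telescoping in the within-cluster indices must supply a compensating zero'' is the right intuition but is not a proof. Two smaller points: the factor $\frac{q_{j}-q_{k}+\rmi\kappa}{q_{j}-q_{k}-\rmi\kappa}$ has its pole at $q_{j}=q_{k}+\rmi\kappa$, not $q_{j}=q_{k}-\rmi\kappa$ as you state (the paper accordingly first shifts the contours \emph{up} to $\mathbb{R}+\rmi\lambda(n-j)$ before bringing them down); and your combinatorial accounting is off, since the symmetrized form (\ref{phi(kappa<0) symmetrized}) carries $1/n!^{2}$, whereas in the paper the factor $n_{j}!(n_{j}-1)!$ accumulates multiplicatively from the residue factors $(n_{m}+1)n_{m}$ at each merge and the $1/M!$ appears only when passing from unordered to ordered partitions, the second permutation $\tau$ being introduced at that late stage rather than carried from the start.
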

Compared to the TW formula (\ref{phi(kappa<0)}), our result (\ref{phi(kappa>0)}) is more complicated. Physically, the complication can be traced to the presence of bound states for the attractive case: the $n$ particles are arranged in $M$ clusters of size $n_{1}$, \ldots, $n_{M}$, hence the extra summations over $M$ and $\vec{n}$. Furthermore, Eq. (\ref{phi(kappa>0)}) contains a summation over two permutations $\sigma$ and $\tau$ instead of only one for the TW formula (\ref{phi(kappa<0)}). In the special case $x=y=0$ discussed in Section \ref{Section x=y=0}, both summations over $\sigma$ and $\tau$ can be eliminated.

For the attractive case, the propagator can also be written in terms of a summation over the eigenstates of the Hamiltonian (\ref{H kappa}). The Bethe eigenfunctions for attractive interaction are (see \cite{D10.2}, Eq. (B.26) and (B.48); in \cite{D10.2}, $r(a)$ is equal to $r(\sigma(a))$ with our notations, and $\Omega_{j}$ to $\sigma^{-1}(\Omega_{j})$)
\begin{eqnarray}
\label{psi(M,n,q)}
&&\fl\hspace{5mm} \psi(x;M,\vec{n},q)=\frac{\kappa^{\frac{n-M}{2}}}{\sqrt{n!}}\prod_{j=1}^{M}\sqrt{n_{j}!(n_{j}-1)!}\sum_{\sigma\in S_{n}'(\vec{n})}\prod_{j=1}^{M}\prod_{a\in\Omega_{j}}\left(\rme^{\rmi\left(q_{j}+\rmi\kappa\left(r(a)-\frac{n_{j}}{2}-\frac{1}{2}\right)\right)x_{\sigma^{-1}(a)}}\right)\nonumber\\
&&\fl\hspace{20mm} \times\prod_{j<k}^{M}\!\!\!\!\!\!\!\!\prod_{\substack{a\in\Omega_{j}\\b\in\Omega_{k}\\\sigma^{-1}(a)>\sigma^{-1}(b)}}\!\!\!\!\!\!\!\!\left(\frac{(q_{j}+\rmi\kappa(r(a)-\frac{n_{j}}{2}))-(q_{k}+\rmi\kappa(r(b)-\frac{n_{k}}{2}))+\rmi\kappa}{(q_{j}+\rmi\kappa(r(a)-\frac{n_{j}}{2}))-(q_{k}+\rmi\kappa(r(b)-\frac{n_{k}}{2}))-\rmi\kappa}\right)\;,
\end{eqnarray}
with $M=1,\ldots,n$, $\vec{n}\in D_{n,M}$ and $q\in\mathbb{R}^{M}$. Eq. (\ref{psi(M,n,q)}) is an eigenfunction of the Hamiltonian (\ref{H kappa}) with eigenvalue
\begin{equation}
\label{E(M,n,q)}
E(M,\vec{n},q)=\sum_{j=1}^{M}\left(n_{j}q_{j}^{2}-\frac{\kappa^{2}}{12}(n_{j}^{3}-n_{j})\right)\;.
\end{equation}
The relation of the propagator with the Bethe eigenfunctions is stated as next theorem in terms of (\ref{psi(M,n,q)}) and (\ref{E(M,n,q)}).
\begin{theorem}
\label{Theorem propagator 2}
For $\kappa>0$ and $x,y\in\Lambda$, one has
\begin{eqnarray}
\label{phi(kappa>0,Bethe)}
&& \langle x|\rme^{-tH_{\kappa}}|y\rangle=\sum_{M=1}^{n}\frac{1}{M!(2\pi)^{M}}\sum_{\vec{n}\in D_{n,M}}\int_{\mathbb{R}^{M}}\rmd q_{1}\,\ldots\,\rmd q_{M}\,\nonumber\\
&&\hspace{25mm} \psi(x;M,\vec{n},q)\;\overline{\psi(y;M,\vec{n},q)}\;\rme^{-tE(M,\vec{n},q)}\;.
\end{eqnarray}
\end{theorem}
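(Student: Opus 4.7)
\textbf{Proof plan for Theorem \ref{Theorem propagator 2}.} The strategy is to deduce the statement from Theorem \ref{Theorem propagator 1} by specialising the free parameters to a distinguished value and then factorising the resulting integrand as $\psi(x)\,\overline{\psi(y)}$. Comparison of the shifts $\mu_j+r(a)-1$ in (\ref{phi(kappa>0)}) with $r(a)-\frac{n_j}{2}-\frac{1}{2}$ in (\ref{psi(M,n,q)}) singles out the choice $\mu_j=(1-n_j)/2$, which satisfies $-n_j<\mu_j\le 0$ for every $n_j\ge 1$. Setting $k_a=q_j+\rmi\kappa(r(a)-(n_j+1)/2)$ for $a\in\Omega_j$, the Gaussian in (\ref{phi(kappa>0)}) becomes $\rme^{-t\sum_a k_a^2}$, and the elementary identities $\sum_{s=1}^{n_j}(s-(n_j+1)/2)=0$ and $\sum_{s=1}^{n_j}(s-(n_j+1)/2)^2=(n_j^3-n_j)/12$ reduce this to $\rme^{-tE(M,\vec{n},q)}$ with $E$ given by (\ref{E(M,n,q)}); so the energy factor already matches.

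The second step is to disentangle the double-permutation $S$-matrix product. Since both momentum arguments appearing in (\ref{phi(kappa>0)}) are shifted by $\rmi\kappa$ from $k_a$, $k_b$, the shifts cancel in the difference and each factor reduces to the bare $S$-matrix $S(k_a-k_b)=(k_a-k_b+\rmi\kappa)/(k_a-k_b-\rmi\kappa)$. The product runs over ordered pairs with $j\neq k$ and the joint constraint $\sigma^{-1}(a)>\sigma^{-1}(b)$, $\tau^{-1}(a)<\tau^{-1}(b)$; splitting it into $j<k$ and $j>k$ and applying $S(-z)=S(z)^{-1}$ in the latter yields the cluster analogue of (\ref{factorization kappa<0}),
\[
\prod_{\substack{j<k\\a\in\Omega_j,\,b\in\Omega_k\\\sigma^{-1}(a)>\sigma^{-1}(b)}}\!\!\!\!\!\!S(k_a-k_b)\;\cdot\;\prod_{\substack{j<k\\a\in\Omega_j,\,b\in\Omega_k\\\tau^{-1}(a)>\tau^{-1}(b)}}\!\!\!\!\!\!S(k_a-k_b)^{-1},
\]
whose two factors depend only on $\sigma$ and only on $\tau$, respectively. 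Combining the $\sigma$-factor with $\prod_a\rme^{\rmi k_a x_{\sigma^{-1}(a)}}$ and the outer sum over $\sigma\in S_n'(\vec{n})$ reproduces $\psi(x;M,\vec{n},q)$ up to the normalisation written in (\ref{psi(M,n,q)}).

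The main difficulty is to match the remaining $\tau$-dependent factor, whose outer sum runs over $S_n''(\vec{n})$, with $\overline{\psi(y;M,\vec{n},q)}$, whose defining sum is over $S_n'(\vec{n})$. I would resolve this by introducing the cluster-reversal involution $\rho$ of $\{1,\ldots,n\}$ that preserves each $\Omega_j$ setwise and sends $r(\rho(a))=n_j+1-r(a)$. Since $q_j\in\mathbb{R}$ one has $\overline{k_a}=k_{\rho(a)}$, and consequently $\overline{S(k_a-k_b)}=S(k_{\rho(a)}-k_{\rho(b)})^{-1}$. Relabelling $a\mapsto\rho(a)$ inside the definition of $\overline{\psi(y;M,\vec{n},q)}$ then converts the outer sum from $\sigma\in S_n'(\vec{n})$ to $\tau=\rho\sigma\in S_n''(\vec{n})$, turns the exponentials into $\rme^{-\rmi k_a y_{\tau^{-1}(a)}}$, and produces precisely the $\tau$-only factor above. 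The prefactor $\kappa^{n-M}/(n!)\prod_j n_j!(n_j-1)!$ in Theorem \ref{Theorem propagator 1} equals the square of the normalisation $\kappa^{(n-M)/2}/\sqrt{n!}\prod_j\sqrt{n_j!(n_j-1)!}$ in (\ref{psi(M,n,q)}), while the combinatorial factor $1/(M!(2\pi)^M)$ is common to both sides, completing the identification. The non-routine ingredient is the bookkeeping through $\rho$; everything else is the factorisation trick already used in Section \ref{Section repulsive} together with prefactor arithmetic.
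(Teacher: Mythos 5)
Your proposal is correct and follows essentially the same route as the paper: the paper likewise ends up at the choice $\mu_j=-(n_j-1)/2$, factorises the two-permutation $S$-matrix product into a $\sigma$-only and a $\tau$-only part (Eq.~(\ref{factorization kappa>0})), and uses the cluster-reversal permutation (called $R(\vec{n})$ there, your $\rho$) to identify the $\tau$-sum over $S_n''(\vec{n})$ with $\overline{\psi(y;M,\vec{n},q)}$. The only organisational difference is that the paper carries out these manipulations on the intermediate representation with contours at $\mathbb{R}-\rmi\kappa\epsilon_j$ and derives Theorems \ref{Theorem propagator 1} and \ref{Theorem propagator 2} in parallel after establishing the pole structure, whereas you obtain Theorem \ref{Theorem propagator 2} as a direct corollary of Theorem \ref{Theorem propagator 1}, which is legitimate since that theorem already asserts validity for every admissible choice of the $\mu_j$.
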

As in the case of repulsive interaction discussed in Section \ref{Section repulsive}, taking $t=0$ yields the completeness relation for the Bethe eigenstates (\ref{psi(M,n,q)}). Their orthonormality is proved in \cite{D10.2}, Appendix B.


\section{Analytic continuation from \texorpdfstring{$\kappa<0$}{kappa<0} to \texorpdfstring{$\kappa>0$}{kappa>0}}
\label{Section analytic continuation}
\setcounter{equation}{0}
In this section, the TW formula (\ref{phi(kappa<0)}) for the propagator is extended by analytic continuation to the attractive case $\kappa>0$. Theorem \ref{Theorem propagator 1} and Theorem \ref{Theorem propagator 2} are proved.

\subsection{Contribution of the residues}
The contours of integration in (\ref{phi(kappa<0)}) can be moved freely as long as the denominators $q_{j}-q_{k}-\rmi\kappa$ keep a strictly positive imaginary part. In particular, if the integration is shifted to $q_{j}\in\mathbb{R}+\rmi\lambda(n-j)$, $j=1,\ldots,n$, with $\lambda>0$, we obtain a formula valid for all $\kappa$ such that $\Im(q_{j}-q_{k}-\rmi\kappa)=(k-j)\lambda-\kappa>0$ for $j<k$, \textit{i.e.} for all $\kappa<\lambda$. One obtains
\begin{eqnarray}
\label{phi(l=n-1)}
&& \langle x|\rme^{-tH_{\kappa}}|y\rangle=\frac{1}{n!(2\pi)^{n}}\prod_{a=1}^{n}\left(\int_{\mathbb{R}+\rmi\lambda(n-a)}\!\!\!\!\!\rmd q_{a}\right)\nonumber\\
&&\hspace{20mm} \sum_{\sigma\in S_{n}}\!\!\!\!\!\prod_{\substack{a<b\\\sigma^{-1}(a)>\sigma^{-1}(b)}}^{n}\!\!\!\!\!\frac{q_{a}-q_{b}+\rmi\kappa}{q_{a}-q_{b}-\rmi\kappa}\;\prod_{a=1}^{n}\left(\rme^{\rmi q_{a}(x_{\sigma^{-1}(a)}-y_{a})}\rme^{-tq_{a}^{2}}\right)\;.
\end{eqnarray}
In the following, we want to further move the contours of integration, but this time the contours will have to cross poles of the integrand, which will add several new terms resulting from the residues at these poles, symbolically,
\begin{equation}
\begin{array}{c}
\begin{picture}(100,23)
\put(0,20){\line(1,0){40}}\put(0,20){\vector(1,0){20}}\put(20,10){$\times$}
\put(50,10){$=$}
\put(60,0){\line(1,0){40}}\put(60,0){\vector(1,0){20}}\put(80,10){$\times$}\put(81.5,11){\circle{10}}\put(80.4,6.1){\vector(-1,0){0}}
\end{picture}
\end{array}
\end{equation}
If one denotes by $j\to k$ the action of taking the residue at $q_{j}=q_{k}+\rmi\kappa r$ with $r\in\mathbb{Z}$, the terms, obtained after moving the contours of integration, correspond to collections of $j\to k$ such that, for each $\ell=1,\ldots,n$, $\ell\to\ldots$ appears only once in the collection (since after taking the residue at $q_{\ell}=q_{m}+\rmi\kappa r$, the integrand no longer contains $q_{\ell}$). Each term thus corresponds to a forest (a set of trees), for example
\begin{equation}
\fl\hspace{20mm}
\begin{array}{ccc}
\{1\to2,2\to7,3\to5,4\to7\}
&
\qquad\Leftrightarrow\qquad
&
\begin{array}{c}
\begin{picture}(30,24)
\put(0,20){$1$}\put(0,10){$2$}\put(5,0){$7$}\put(10,10){$4$}\put(20,10){$3$}\put(20,0){$5$}\put(30,0){$6$}
\put(1,19){\vector(0,-1){5}}\put(2,9){\vector(1,-2){2.5}}\put(10,9){\vector(-1,-2){2.5}}
\put(21,9){\vector(0,-1){5}}
\end{picture}
\end{array}
\end{array}
\end{equation}
The particular trees obtained in this fashion depend on the order in which the contours are moved. Here, we choose to move first the contour for $q_{n-1}$ in such a way that it crosses only the pole at $q_{n-1}=q_{n}+\rmi\kappa$. Then, we move the contour for $q_{n-2}$ in such a way that it crosses only the poles at $q_{n-2}=q_{n}+\rmi\kappa$ and $q_{n-2}=q_{n-1}+\rmi\kappa$ (in which case we still have an integration over both $q_{n-1}$ and $q_{n}$), or only the pole at $q_{n-2}=q_{n}+2\rmi\kappa$ (in which case the residue at $q_{n-1}=q_{n}+\rmi\kappa$ has been taken). We continue in this fashion until in the final step the contour for $q_{1}$ is moved.

In principle, after moving the contours, the propagator $\langle x|\rme^{-tH_{\kappa}}|y\rangle$ will be expressed as a sum over forests. In fact, it turns out that during this procedure there are many cancellations which remove all the forests which contain trees with ``branches'': in other words, only the forests with merely ``branchless'' trees (like $a$, $a\to b$, $a\to b\to c$, $a\to b\to c\to d$, \ldots) remain after these cancellations. Instead of a sum over forests, we end up with a sum over partitions of $\{1,\ldots,n\}$ (each element of the partition corresponding to one of the branchless trees of the forest).

In the context of the distribution of the leftmost particle in the asymmetric simple exclusion process, the procedure described here bears some similarity with the transformation from Theorem 3.1 to Theorem 3.2 in \cite{TW08.1}, where contours of integration are moved from small to large circles. A complication in our context is that we need a summation over all partitions of $\{1,\ldots,n\}$ and not just over subsets of $\{1,\ldots,n\}$. We expect that in the case of the full transition probability for the asymmetric exclusion process an expression with an integration over large circles would require summing over all partitions of $\{1,\ldots,n\}$.

A proof of the previous statements is based on induction w.r.t. an integer $\ell$ such that all the contours for $q_{\ell+1}$, \ldots, $q_{n-1}$ have already been moved.

We introduce a few notations. For a boolean condition $c$, $\openone_{\{c\}}$ is defined to be equal to $1$ if $c$ is true and $0$ otherwise. For $\vec{n}\in D_{n,M}$, the set $P_{n}(\vec{n})$ contains all the partitions $\vec{A}=\{A_{1},\ldots,A_{M}\}$ of $\{1,\ldots,n\}$ with $|A_{j}|=n_{j}$, $j=1,\ldots,M$. The partition $\vec{A}$ verifies $A_{1}\cup\ldots\cup A_{M}=\{1,\ldots,n\}$ and for $j\neq k$ $A_{j}\cap A_{k}=\emptyset$. The partitions are not ordered, \textit{i.e.} the partition $\vec{B}=\{A_{R(1)},\ldots,A_{R(M)}\}$ is considered to be the same element of $P_{n}(\vec{n})$ as $\vec{A}$ for all $R\in S_{M}$. Each $A_{j}$ is called a \textit{cluster}, and will correspond to a bound state of particles in the Bethe ansatz point of view. For a partition $\vec{A}$, we define $d_{\vec{A}}(a)$, $a=1,\ldots,n$, (abbreviated as $d(a)$ to lighten the notation) to be the rank of $a$ in its cluster $A_{j}$, starting with rank $0$ for the largest element of the cluster, rank $1$ for the second largest, \ldots, and rank $|A_{j}|-1$ for the smallest element of $A_{j}$.

With these notation, the following lemma can be stated.
\begin{lemma}
\label{Lemma analytic continuation}
Let $\ell$ be an integer between $0$ and $n-1$. For fixed $M=1,\ldots,n$, let $\epsilon_{j}$, $j=\ell+1,\ldots,M$ be distinct numbers with $0\leq\epsilon_{j}<1$. Then, for $0<\kappa<\lambda$ one has
\begin{eqnarray}
\label{phi(l)}
&&\fl\hspace{10mm} \langle x|\rme^{-tH_{\kappa}}|y\rangle=\sum_{M=1}^{n}\frac{\kappa^{n-M}}{n!(2\pi)^{M}}\sum_{\vec{n}\in D_{n,M}}\prod_{j=1}^{M}(n_{j}!(n_{j}-1)!)\prod_{j=1}^{\ell}\left(\int_{\mathbb{R}+\rmi\lambda(n-j)}\!\!\!\!\!\rmd q_{j}\right)\nonumber\\
&&\fl\hspace{10mm} \times\!\!\prod_{j=\ell+1}^{M}\left(\int_{\mathbb{R}-\rmi\kappa\epsilon_{j}}\rmd q_{j}\right)\sum_{\vec{A}\in P_{n}(\vec{n})}\sum_{\sigma\in S_{n}}\;\prod_{j=1}^{\ell}\openone_{\{A_{j}=\{j\}\}}\;\prod_{j=1}^{M}\!\!\prod_{\substack{a,b\in A_{j}\\a<b}}\openone_{\{\sigma^{-1}(a)>\sigma^{-1}(b)\}}\nonumber\\
&&\fl\hspace{10mm} \times\prod_{j=1}^{M}\prod_{a\in A_{j}}\left(\rme^{\rmi(q_{j}+\rmi\kappa d(a))(x_{\sigma^{-1}(a)}-y_{a})}\rme^{-t(q_{j}+\rmi\kappa d(a))^{2}}\right)\nonumber\\
&&\fl\hspace{10mm} \times\!\!\prod_{\substack{j,k=1\\j\neq k}}^{M}\!\!\!\!\!\!\!\!\!\!\prod_{\substack{a\in A_{j}\\b\in A_{k}\\a<b\\\sigma^{-1}(a)>\sigma^{-1}(b)}}\!\!\!\!\!\!\!\!\left(\frac{(q_{j}+\rmi\kappa d(a))-(q_{k}+\rmi\kappa d(b))+\rmi\kappa}{(q_{j}+\rmi\kappa d(a))-(q_{k}+\rmi\kappa d(b))-\rmi\kappa}\right)\;.
\end{eqnarray}
\end{lemma}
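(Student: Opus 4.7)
I would proceed by descending induction on $\ell$, from $\ell=n-1$ down to $\ell=0$.

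\textbf{Base case} ($\ell=n-1$). The indicators $\openone_{\{A_j=\{j\}\}}$ for $j=1,\ldots,n-1$ force $M=n$, every $n_j=1$, and $d(a)\equiv 0$, so that all cluster products trivialise. The right-hand side of (\ref{phi(l)}) then reduces to (\ref{phi(l=n-1)}) after an unobstructed shift of the $q_n$-contour from $\mathbb{R}$ to $\mathbb{R}-\rmi\kappa\epsilon_n$: the only $q_n$-poles, located at $q_n=q_a-\rmi\kappa$ for $a<n$, sit at heights $\lambda(n-a)-\kappa>0$, hence above the real axis and never crossed.

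\textbf{Inductive step.} Assume the formula at level $\ell$. The only terms in which $q_\ell$ occurs as an integration variable are those with $A_\ell=\{\ell\}$, and for them I would move the contour from $\mathbb{R}+\rmi\lambda(n-\ell)$ down to $\mathbb{R}-\rmi\kappa\epsilon_\ell$ (picking $\epsilon_\ell\in[0,1)$ distinct from $\epsilon_{\ell+1},\ldots,\epsilon_M$). Because $A_\ell=\{\ell\}$ is a singleton with $d(\ell)=0$, the $q_\ell$-dependent factors fall into two types: Type I, of the form
\[
\frac{q_\ell-q_k-\rmi\kappa d(b)+\rmi\kappa}{q_\ell-q_k-\rmi\kappa d(b)-\rmi\kappa}\qquad(k>\ell,\ b\in A_k,\ \sigma^{-1}(\ell)>\sigma^{-1}(b)),
\]
with poles at heights $\kappa(d(b)+1-\epsilon_k)\in(0,\kappa n_k]$; and Type II, of the form $(q_j-q_\ell+\rmi\kappa)/(q_j-q_\ell-\rmi\kappa)$ for $j<\ell$ with $\sigma^{-1}(j)>\sigma^{-1}(\ell)$, whose poles sit at heights $\lambda(n-j)-\kappa>\lambda(n-\ell)$. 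Since $\kappa<\lambda$ and $n_k\leq n-\ell$, only the Type I poles lie in the strip between the two contours, so only they are crossed. The new-contour integral manifestly produces those level-$(\ell-1)$ terms in which $A_\ell=\{\ell\}$ remains a singleton, and the residues should produce the remaining level-$(\ell-1)$ terms, where $\{\ell\}$ is absorbed into some cluster.

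\textbf{Cancellation of branching residues (the main obstacle).} At each admissible $(k,b^{*})$ the residue at $q_\ell=q_k+\rmi\kappa(d(b^{*})+1)$ equals $2\rmi\kappa$ times the remaining factors evaluated at the pole. When $b^{*}=\min(A_k)$, i.e.\ $d(b^{*})=n_k-1$, the pole sits at $q_k+\rmi\kappa n_k$, which is exactly the value $q_\ell=q_k+\rmi\kappa d(\ell)$ predicted by the level-$(\ell-1)$ formula for the merged cluster $A_k\cup\{\ell\}$ (with $\ell$ the new smallest element and $d(\ell)=n_k$); these residues contribute the genuine ``chain-extension'' terms. For $d(b^{*})<n_k-1$ the pole corresponds to a ``branching'' of the tree attached to $A_k$, which has no counterpart in the level-$(\ell-1)$ formula, and the associated residues must be shown to cancel. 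I expect this to be the principal technical hurdle of the proof, and to require a careful partial-fraction/telescoping analysis of the product of Type I factors over $b\in A_k$, combined with the sum over permutations $\sigma$ whose indicator conditions $\sigma^{-1}(\ell)>\sigma^{-1}(b)$ decide which factors actually enter the product. Once the branching residues are eliminated, the surviving chain residues and the new-contour integral reassemble into (\ref{phi(l)}) at level $\ell-1$ after the clusters are relabelled by their smallest element; the factor $2\rmi\kappa$ per surviving residue, together with the $n_k$-fold multiplicity from the $\sigma$-sum, correctly updates the prefactor $\kappa^{n-M}\prod_j n_j!(n_j-1)!$ and closes the induction.
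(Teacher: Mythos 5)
Your setup — descending induction on $\ell$, the base case $\ell=n-1$, the identification of which poles lie between the two contours, and the recognition that the residues at $q_{\ell}=q_{k}+\rmi\kappa(d(b^{*})+1)$ with $b^{*}=\min(A_{k})$ are exactly the chain-extension terms of the level-$(\ell-1)$ formula — coincides with the paper's proof. But the one step you explicitly defer, the cancellation of the branching residues with $d(b^{*})<n_{k}-1$, is the entire technical content of the lemma, and ``a careful partial-fraction/telescoping analysis'' is not what makes it work. The actual mechanism is a pairing of permutations. After factoring out $2\rmi\kappa$, the $q_{\ell}$-dependent intra-pair factors reduce to
\begin{equation}
\prod_{\substack{b\in A_{k},\ b\neq b^{*}\\ \sigma^{-1}(\ell)>\sigma^{-1}(b)}}\frac{d(b^{*})-d(b)+2}{d(b^{*})-d(b)}\;,
\end{equation}
and since the indicator constraints force the $\sigma^{-1}(b)$, $b\in A_{k}$, to be ordered like the $d(b)$, the set $\{b:\sigma^{-1}(\ell)>\sigma^{-1}(b)\}$ is an upper segment $\{b\geq f\}$ of the cluster. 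Three cases occur: if $d(f)\geq d(b^{*})+2$ the product contains a vanishing numerator and the term is zero outright; if $d(f)=d(b^{*})+1$ or $d(f)=d(b^{*})$ the product equals $\mp(d(b^{*})+1)(d(b^{*})+2)/2$, and these two cases are swapped by composing $\sigma$ with the transposition of $\ell$ and the element $c'$ with $d(c')=d(b^{*})+1$, so they cancel pairwise in the sum over $\sigma$. Without this argument (or a genuine substitute) the induction does not close, so as written the proposal has a gap at its central step.

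A secondary inaccuracy: the combinatorial factor that updates the prefactor is not an ``$n_{k}$-fold multiplicity from the $\sigma$-sum''. For the surviving residue $b^{*}=\min(A_{k})$ all $b\neq b^{*}$ in the cluster satisfy $\sigma^{-1}(b)<\sigma^{-1}(b^{*})<\sigma^{-1}(\ell)$, so the product above runs over the whole cluster and equals $(n_{k}+1)n_{k}/2$; combined with $-2\rmi\pi$ and $2\rmi\kappa$ this gives $2\pi\kappa(n_{k}+1)n_{k}$, which is precisely what converts $\kappa^{n-M}(2\pi)^{-M}n_{k}!(n_{k}-1)!$ into the level-$(\ell-1)$ prefactor for the merged cluster.
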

\begin{proof}
The constraint on the $\epsilon_{j}$, $j=\ell+1,M$, implies that (\ref{phi(l)}) is well defined since all the poles are at $q_{j}=q_{k}+\rmi\delta$ with $\Im(q_{j})\neq\Im(q_{k})+\delta$.

For $\ell=n-1$, the identity between the expressions (\ref{phi(l=n-1)}) and (\ref{phi(l)}) of $\langle x|\rme^{-tH_{\kappa}}|y\rangle$ is immediate. All the clusters must have size $1$, and only $M=n$ contributes. Since the poles for $q_{n}$ are at $q_{n}=q_{j}-\rmi\kappa$, $j=1,\ldots,n-1$, the contour for $q_{n}$ can be moved freely from $\mathbb{R}$ to $\mathbb{R}-\rmi\kappa\epsilon_{n}$, provided $0\leq\epsilon_{n}<1$ and $\kappa>0$.

We now proceed to prove the general identity by induction in $\ell$: we assume that the expression (\ref{phi(l)}) for $\langle x|\rme^{-tH_{\kappa}}|y\rangle$ is valid for $\ell\geq1$ and will establish the expression with $\ell$ replaced by $\ell-1$.

For given $\vec{n}$, we want to move the contour of integration for $q_{\ell}$ from $\mathbb{R}+\rmi\lambda(n-\ell)$ to $\mathbb{R}-\rmi\kappa\epsilon_{\ell}$ with $0\leq\epsilon_{\ell}<1$ and $\epsilon_{\ell}$ different from all the other $\epsilon_{k}$, $k=\ell+1,\ldots,M$. In order to accomplish this, one needs to take into account the residues of the poles at $q_{\ell}=z$ for $-\kappa\epsilon_{\ell}<\Im(z)<\lambda(n-\ell)$. The only poles for $q_{\ell}$ are at $z=q_{j}-\rmi\kappa$, $j=1,\ldots,\ell-1$, and at $z=q_{m}+\rmi\kappa(d(c)+1)$, $m=\ell+1,\ldots,M$, $c\in A_{m}$. In the first case, using $\kappa<\lambda$ and $j\leq\ell-1$, one finds $\Im(z)>\lambda(n-\ell)$, which implies that these poles do not contribute when moving the contour for $q_{\ell}$. In the second case, using $0<\kappa<\lambda$, $0\leq\epsilon_{m}<1$ and $\ell+d(c)+1\leq n$, one has $-\kappa\epsilon_{\ell}\leq0<\Im(z)<\lambda(n-\ell)$, which implies that all these poles contribute a residue (with a factor $-2\rmi\pi$ corresponding to a clockwise contour integration).

Moving the contour for $q_{\ell}$ produces several terms: one term corresponding to the integration over $q_{\ell}\in\mathbb{R}-\rmi\kappa\epsilon_{\ell}$, for which the integrand still depends on $q_{\ell}$, and one term for each $c\in A_{m}$, $m=\ell+1,\ldots,M$, for which the residue at $q_{\ell}=q_{m}+\rmi\kappa(d(c)+1)$ has been taken. The latter term corresponds to merging the cluster $A_{\ell}=\{\ell\}$ and the cluster $A_{m}$. Assuming $\sigma^{-1}(\ell)>\sigma^{-1}(c)$ (otherwise, the pole vanishes), this term is equal to
\begin{eqnarray}
\label{residue}
&& (-2\rmi\pi)\frac{\kappa^{n-M}}{n!(2\pi)^{M}}\prod_{j=1}^{M}(n_{j}!(n_{j}-1)!)\prod_{j=1}^{\ell-1}\left(\int_{\mathbb{R}+\rmi\lambda(n-j)}\!\!\!\!\!\rmd q_{j}\right)\prod_{j=\ell+1}^{M}\left(\int_{\mathbb{R}-\rmi\kappa\epsilon_{j}}\rmd q_{j}\right)\nonumber\\
&& \sum_{\vec{A}\in P_{n}(\vec{n})}\sum_{\sigma\in S_{n}}\prod_{j=1}^{\ell}\openone_{\{A_{j}=\{j\}\}}\prod_{j=1}^{M}\prod_{\substack{a,b\in A_{j}\\a<b}}\openone_{\{\sigma^{-1}(a)>\sigma^{-1}(b)\}}\nonumber\\
&& \times\left(\rme^{\rmi(q_{m}+\rmi\kappa(d(c)+1))(x_{\sigma^{-1}(\ell)}-y_{\ell})}\;\rme^{-t(q_{m}+\rmi\kappa(d(c)+1))^{2}}\right)\nonumber\\
&& \times\prod_{\substack{j=1\\j\neq\ell}}^{M}\prod_{a\in A_{j}}\left(\rme^{\rmi(q_{j}+\rmi\kappa d(a))(x_{\sigma^{-1}(a)}-y_{a})}\rme^{-t(q_{j}+\rmi\kappa d(a))^{2}}\right)\nonumber\\
&& \times\prod_{\substack{j,k=1\\j\neq k\\j,k\neq\ell}}^{M}\!\!\!\!\!\!\!\!\prod_{\substack{a\in A_{j}\\b\in A_{k}\\a<b\\\sigma^{-1}(a)>\sigma^{-1}(b)}}\!\!\!\!\!\!\!\!\left(\frac{(q_{j}+\rmi\kappa d(a))-(q_{k}+\rmi\kappa d(b))+\rmi\kappa}{(q_{j}+\rmi\kappa d(a))-(q_{k}+\rmi\kappa d(b))-\rmi\kappa}\right)\nonumber\\
&& \times\prod_{\substack{j,k=1\\j\neq k\\j\neq\ell}}^{M}\!\!\!\!\!\!\prod_{\substack{a\in A_{j}\\a<\ell\\\sigma^{-1}(a)>\sigma^{-1}(\ell)}}\!\!\!\!\!\!\left(\frac{(q_{j}+\rmi\kappa d(a))-(q_{m}+\rmi\kappa(d(c)+1))+\rmi\kappa}{(q_{j}+\rmi\kappa d(a))-(q_{m}+\rmi\kappa(d(c)+1))-\rmi\kappa}\right)\nonumber\\
&& \times\prod_{\substack{j,k=1\\j\neq k\\k\neq\ell}}^{M}\!\!\!\!\!\!\prod_{\substack{b\in A_{k}\\\ell<b\\\sigma^{-1}(\ell)>\sigma^{-1}(b)}}\!\!\!\!\!\!\left(\frac{(q_{m}+\rmi\kappa(d(c)+1))-(q_{k}+\rmi\kappa d(b))+\rmi\kappa}{(q_{m}+\rmi\kappa(d(c)+1))-(q_{k}+\rmi\kappa d(b))-\rmi\kappa}\right)\nonumber\\
&& \times\!\!\!\!\!\!\!\!\prod_{\substack{b\in A_{m}\\b\neq c\\\sigma^{-1}(\ell)>\sigma^{-1}(b)}}\!\!\!\!\!\!\!\!\left(\frac{(q_{m}+\rmi\kappa(d(c)+1))-(q_{m}+\rmi\kappa d(b))+\rmi\kappa}{(q_{m}+\rmi\kappa(d(c)+1))-(q_{m}+\rmi\kappa d(b))-\rmi\kappa}\right)\nonumber\\
&&\hspace{11mm} \times((q_{m}+\rmi\kappa(d(c)+1))-(q_{m}+\rmi\kappa d(c))+\rmi\kappa)\;.
\end{eqnarray}
The last line of (\ref{residue}) contributes a factor $2\rmi\kappa$ and the line before contributes
\begin{equation}
\label{residue factor 1}
\prod_{\substack{b\in A_{m}\\b\neq c\\\sigma^{-1}(\ell)>\sigma^{-1}(b)}}\!\!\!\!\!\!\!\!\left(\frac{d(c)-d(b)+2}{d(c)-d(b)}\right)\;.
\end{equation}
Let us first assume that $c=\min(A_{m})$. Then, for all $b\in A_{m}$, $b\neq c$ one has $\sigma^{-1}(b)<\sigma^{-1}(c)$. Together with $\sigma^{-1}(c)<\sigma^{-1}(\ell)$, it implies that all the elements of the cluster $A_{m}$ (except $c$) contribute in (\ref{residue factor 1}). This results in a factor $(n_{m}+1)n_{m}/2$. Combined with $(-2\rmi\pi)$ and $2\rmi\kappa$, we obtain a factor $2\pi\kappa(n_{k}+1)n_{k}$. The term with $c=\min(A_{m})$ thus corresponds exactly to the term of (\ref{phi(l)}) with $\ell$ replaced by $\ell-1$ and the partition $\vec{A}$ replaced by $\vec{B}$, obtained from $\vec{A}$ by merging the cluster $\{\ell\}$ with $A_{m}$ (after a renaming of the $q_{j}$, $n_{j}$, $\epsilon_{j}$ to $q_{j-1}$, $n_{j-1}$, $\epsilon_{j-1}$ for $\ell+1\leq j\leq M$).

It remains to show that for $c\neq\min(A_{m})$, the residues cancel each other. Since the $\sigma^{-1}(b)$ are ordered in the same way as the $d(b)$ for $b\in A_{m}$, there exists a unique number $f\in A_{m}$ such that for $b\in A_{m}$, if $b\geq f$ then $\sigma^{-1}(b)<\sigma^{-1}(\ell)$, and if $b<f$ then $\sigma^{-1}(b)>\sigma^{-1}(\ell)$. Since $\sigma^{-1}(c)<\sigma^{-1}(\ell)$, one has necessarily $f\leq c$ (or equivalently $d(f)\geq d(c)$). Then, (\ref{residue factor 1}) rewrites
\begin{equation}
\label{residue factor 2}
\prod_{\substack{b\in A_{m}\\b\neq c\\b\geq f}}\left(\frac{d(c)-d(b)+2}{d(c)-d(b)}\right)\;.
\end{equation}
The rest of the argument depends on the relative values of $d(c)$ and $d(f)$. If $d(f)\geq d(c)+2$, then, there exists $b\in A_{m}$ such that $b\geq f$ and $d(b)=d(c)+2$, thus (\ref{residue factor 2}) is equal to zero. Since $d(f)\geq d(c)$, the only cases left are $d(f)=d(c)+1$ and $d(f)=d(c)$, for which (\ref{residue factor 2}) rewrites respectively
\begin{equation}
\label{residue factor 3}
\fl\hspace{5mm} \frac{d(c)-d(f)+2}{d(c)-d(f)}\prod_{\substack{b\in A_{m}\\b>c}}\left(\frac{d(c)-d(b)+2}{d(c)-d(b)}\right)=-\frac{(d(c)+1)(d(c)+2)}{2}\;,
\end{equation}
and
\begin{equation}
\label{residue factor 4}
\prod_{\substack{b\in A_{m}\\b>c}}\left(\frac{d(c)-d(b)+2}{d(c)-d(b)}\right)=\frac{(d(c)+1)(d(c)+2)}{2}\;.
\end{equation}
Let us call $c'$ the element of $A_{m}$ such that $d(c')=d(c)+1$ ($c'$ is the smallest element of $A_{m}$ larger that $c$). One notes that the two previous cases are exchanged when replacing $\sigma^{-1}$ by $\sigma^{-1}\circ\theta_{\ell,c'}$, with $\theta_{\ell,c'}$ the permutation exchanging $\ell$ and $c'$. Thus, summing over all permutations $\sigma$, the residues at $q_{\ell}=q_{m}+\rmi\kappa(d(c)+1)$ cancel.
\end{proof}
Our construction achieves the proof of (\ref{phi(l)}) for $0<\kappa<\lambda$ and $\ell$ between $0$ and $n-1$. In particular, for $\ell=0$, one has
\begin{eqnarray}
\label{phi(l=0)}
&& \langle x|\rme^{-tH_{\kappa}}|y\rangle=\sum_{M=1}^{n}\frac{\kappa^{n-M}}{n!(2\pi)^{M}}\sum_{\vec{n}\in D_{n,M}}\prod_{j=1}^{M}(n_{j}!(n_{j}-1)!)\nonumber\\
&& \times\prod_{j=1}^{M}\left(\int_{\mathbb{R}-\rmi\kappa\epsilon_{j}}\rmd q_{j}\right)\sum_{\vec{A}\in P_{n}(\vec{n})}\sum_{\sigma\in S_{n}}\prod_{j=1}^{M}\prod_{\substack{a,b\in A_{j}\\a<b}}\openone_{\{\sigma^{-1}(a)>\sigma^{-1}(b)\}}\nonumber\\
&& \times\prod_{j=1}^{M}\prod_{a\in A_{j}}\left(\rme^{\rmi(q_{j}+\rmi\kappa d(a))(x_{\sigma^{-1}(a)}-y_{a})}\rme^{-t(q_{j}+\rmi\kappa d(a))^{2}}\right)\nonumber\\
&& \times\!\!\!\!\prod_{\substack{j,k=1\\j\neq k}}^{M}\!\!\!\!\!\!\!\!\prod_{\substack{a\in A_{j}\\b\in A_{k}\\a<b\\\sigma^{-1}(a)>\sigma^{-1}(b)}}\!\!\!\!\!\!\!\!\left(\frac{(q_{j}+\rmi\kappa d(a))-(q_{k}+\rmi\kappa d(b))+\rmi\kappa}{(q_{j}+\rmi\kappa d(a))-(q_{k}+\rmi\kappa d(b))-\rmi\kappa}\right)\;.
\end{eqnarray}
This expression no longer depends on $\lambda$. Hence, it is valid in the entire range $\kappa>0$.

\subsection{Partitions and permutations}
Exchanging $A_{j}$ and $A_{k}$ in (\ref{phi(l=0)}) is the same as exchanging $q_{j}$ and $q_{k}$, or $\epsilon_{j}$ and $\epsilon_{k}$. Since the $\epsilon_{j}$ are arbitrary numbers satisfying a constraint ($0\leq\epsilon_{j}<1$ and all $\epsilon_{j}$ different) which is the same for all $j$, it is possible to add an extra sum over all permutations of the $A_{j}$, compensated by a global factor $1/M!$. This is equivalent to summing now over \textit{ordered partitions} $\vec{A}=(A_{1},\ldots,A_{n})$, such that for all $R\in S_{M}$ different from the identity permutation, the ordered partition $\vec{B}=(A_{R(1)},\ldots,A_{R(M)})$ is distinct from $\vec{A}$.

There exists a bijection between ordered partitions $\vec{A}$ such that $|A_{j}|=n_{j}$, $j=1,\ldots,M$, and permutations $\tau\in S_{n}''(\vec{n})$ ($S_{n}''(\vec{n})$ is defined after Eq. (\ref{phi(kappa>0)})). By this bijection, the cluster $A_{j}$ is equal to $\{\tau^{-1}(a),a\in\Omega_{j}(\vec{n})\}\equiv\tau^{-1}(\Omega_{j})$ and one has $1+d_{\vec{A}}(a)=r_{\vec{n}}(\tau(a))$, using the definitions (\ref{Omegaj}) and (\ref{rn(a)}). Eq. (\ref{phi(l=0)}) becomes
\begin{eqnarray}
&& \langle x|\rme^{-tH_{\kappa}}|y\rangle=\sum_{M=1}^{n}\frac{\kappa^{n-M}}{n!M!(2\pi)^{M}}\sum_{\vec{n}\in D_{n,M}}\prod_{j=1}^{M}(n_{j}!(n_{j}-1)!)\nonumber\\
&& \times\prod_{j=1}^{M}\left(\int_{\mathbb{R}-\rmi\kappa\epsilon_{j}}\rmd q_{j}\right)\sum_{\sigma\in S_{n}}\sum_{\tau\in S_{n}''(\vec{n})}\prod_{j=1}^{M}\prod_{\substack{a,b\in\tau^{-1}(\Omega_{j})\\a<b}}\openone_{\{\sigma^{-1}(a)>\sigma^{-1}(b)\}}\nonumber\\
&& \times\prod_{j=1}^{M}\prod_{a\in\tau^{-1}(\Omega_{j})}\left(\rme^{\rmi(q_{j}+\rmi\kappa(r(\tau(a))-1))(x_{\sigma^{-1}(a)}-y_{a})}\rme^{-t(q_{j}+\rmi\kappa(r(\tau(a))-1))^{2}}\right)\nonumber\\
&& \times\!\!\!\!\prod_{\substack{j,k=1\\j\neq k}}^{M}\!\!\!\!\!\!\!\!\prod_{\substack{a\in\tau^{-1}(\Omega_{j})\\b\in\tau^{-1}(\Omega_{k})\\a<b\\\sigma^{-1}(a)>\sigma^{-1}(b)}}\!\!\!\!\!\!\!\!\left(\frac{(q_{j}+\rmi\kappa r(\tau(a)))-(q_{k}+\rmi\kappa r(\tau(b)))+\rmi\kappa}{(q_{j}+\rmi\kappa r(\tau(a)))-(q_{k}+\rmi\kappa r(\tau(b)))-\rmi\kappa}\right)\;.
\end{eqnarray}
One can now replace everywhere $a$ and $b$ by $\tau^{-1}(a)$ and $\tau^{-1}(b)$. We also replace $\sigma$ by $\tau^{-1}\circ\sigma$. Because of the definition of $S_{n}''(\vec{n})$, if $a,b\in\Omega_{j}$ with $\tau^{-1}(a)<\tau^{-1}(b)$ then $a>b$. This implies that the constraint with the $\openone_{\{\ldots\}}$ is equivalent to $\sigma\in S_{n}'(\vec{n})$ (defined after Eq. (\ref{phi(kappa>0)})). We obtain
\begin{eqnarray}
\label{phi(sigma,tau)}
&&\fl \langle x|\rme^{-tH_{\kappa}}|y\rangle=\sum_{M=1}^{n}\frac{\kappa^{n-M}}{n!M!(2\pi)^{M}}\sum_{\vec{n}\in D_{n,M}}\prod_{j=1}^{M}(n_{j}!(n_{j}-1)!)\prod_{j=1}^{M}\left(\int_{\mathbb{R}-\rmi\kappa\epsilon_{j}}\rmd q_{j}\right)\nonumber\\
&&\fl\hspace{25mm} \sum_{\sigma\in S_{n}'(\vec{n})}\sum_{\tau\in S_{n}''(\vec{n})}\prod_{j=1}^{M}\prod_{a\in\Omega_{j}}\left(\rme^{\rmi(q_{j}+\rmi\kappa(r(a)-1))(x_{\sigma^{-1}(a)}-y_{\tau^{-1}(a)})}\rme^{-t(q_{j}+\rmi\kappa(r(a)-1))^{2}}\right)\nonumber\\
&&\fl\hspace{25mm} \times\prod_{\substack{j,k=1\\j\neq k}}^{M}\!\!\!\!\!\!\!\!\prod_{\substack{a\in\Omega_{j}\\b\in\Omega_{k}\\\sigma^{-1}(a)>\sigma^{-1}(b)\\\tau^{-1}(a)<\tau^{-1}(b)}}\!\!\!\!\!\!\!\!\left(\frac{(q_{j}+\rmi\kappa r(a))-(q_{k}+\rmi\kappa r(b))+\rmi\kappa}{(q_{j}+\rmi\kappa r(a))-(q_{k}+\rmi\kappa r(b))-\rmi\kappa}\right)\;.
\end{eqnarray}

\subsection{Pole structure of the integrand and summation over Bethe eigenstates}
One has the factorization
\begin{eqnarray}
\label{factorization kappa>0}
&& \prod_{\substack{j,k=1\\j\neq k}}^{M}\!\!\!\!\!\!\!\!\prod_{\substack{a\in\Omega_{j}\\b\in\Omega_{k}\\\sigma^{-1}(a)>\sigma^{-1}(b)\\\tau^{-1}(a)<\tau^{-1}(b)}}\!\!\!\!\!\!\!\!\left(\frac{(q_{j}+\rmi\kappa r(a))-(q_{k}+\rmi\kappa r(b))+\rmi\kappa}{(q_{j}+\rmi\kappa r(a))-(q_{k}+\rmi\kappa r(b))-\rmi\kappa}\right)\nonumber\\
&& =\prod_{j<k}^{M}\!\!\!\!\!\!\!\!\prod_{\substack{a\in\Omega_{j}\\b\in\Omega_{k}\\\sigma^{-1}(a)>\sigma^{-1}(b)\\\tau^{-1}(a)<\tau^{-1}(b)}}\!\!\!\!\!\!\!\!\left(\frac{(q_{j}+\rmi\kappa r(a))-(q_{k}+\rmi\kappa r(b))+\rmi\kappa}{(q_{j}+\rmi\kappa r(a))-(q_{k}+\rmi\kappa r(b))-\rmi\kappa}\right)\nonumber\\
&&\hspace{20mm} \times\prod_{j<k}^{M}\!\!\!\!\!\!\!\!\prod_{\substack{a\in\Omega_{j}\\b\in\Omega_{k}\\\sigma^{-1}(a)<\sigma^{-1}(b)\\\tau^{-1}(a)>\tau^{-1}(b)}}\!\!\!\!\!\!\!\!\left(\frac{(q_{j}+\rmi\kappa r(a))-(q_{k}+\rmi\kappa r(b))-\rmi\kappa}{(q_{j}+\rmi\kappa r(a))-(q_{k}+\rmi\kappa r(b))+\rmi\kappa}\right)\nonumber\\
&& =\prod_{j<k}^{M}\!\!\!\!\!\!\!\!\prod_{\substack{a\in\Omega_{j}\\b\in\Omega_{k}\\\sigma^{-1}(a)>\sigma^{-1}(b)}}\!\!\!\!\!\!\!\!\left(\frac{(q_{j}+\rmi\kappa r(a))-(q_{k}+\rmi\kappa r(b))+\rmi\kappa}{(q_{j}+\rmi\kappa r(a))-(q_{k}+\rmi\kappa r(b))-\rmi\kappa}\right)\nonumber\\
&&\hspace{20mm} \times\prod_{j<k}^{M}\!\!\!\!\!\!\!\!\prod_{\substack{a\in\Omega_{j}\\b\in\Omega_{k}\\\tau^{-1}(a)>\tau^{-1}(b)}}\!\!\!\!\!\!\!\!\left(\frac{(q_{j}+\rmi\kappa r(a))-(q_{k}+\rmi\kappa r(b))-\rmi\kappa}{(q_{j}+\rmi\kappa r(a))-(q_{k}+\rmi\kappa r(b))+\rmi\kappa}\right)\;.
\end{eqnarray}
We introduce the functions
\begin{eqnarray}
\label{phi(x)}
&&\fl\hspace{10mm} \varphi_{\kappa}(x;M,\vec{n},q)=\frac{\kappa^{\frac{n-M}{2}}}{\sqrt{n!}}\prod_{j=1}^{M}\sqrt{n_{j}!(n_{j}-1)!}\sum_{\sigma\in S_{n}'(\vec{n})}\prod_{j=1}^{M}\prod_{a\in\Omega_{j}}\left(\rme^{\rmi(q_{j}+\rmi\kappa(r(a)-1))x_{\sigma^{-1}(a)}}\right)\nonumber\\
&&\fl\hspace{35mm} \times\prod_{j<k}^{M}\!\!\!\!\!\!\!\!\prod_{\substack{a\in\Omega_{j}\\b\in\Omega_{k}\\\sigma^{-1}(a)>\sigma^{-1}(b)}}\!\!\!\!\!\!\!\!\left(\frac{(q_{j}+\rmi\kappa r(a))-(q_{k}+\rmi\kappa r(b))+\rmi\kappa}{(q_{j}+\rmi\kappa r(a))-(q_{k}+\rmi\kappa r(b))-\rmi\kappa}\right)\;,
\end{eqnarray}
and
\begin{eqnarray}
&&\fl\hspace{10mm} \widetilde{\varphi}_{\kappa}(y;M,\vec{n},q)=\frac{\kappa^{\frac{n-M}{2}}}{\sqrt{n!}}\prod_{j=1}^{M}\sqrt{n_{j}!(n_{j}-1)!}\sum_{\tau\in S_{n}''(\vec{n})}\prod_{j=1}^{M}\prod_{a\in\Omega_{j}}\left(\rme^{-\rmi(q_{j}+\rmi\kappa(r(a)-1))y_{\tau^{-1}(a)}}\right)\nonumber\\
&&\fl\hspace{35mm} \times\prod_{j<k}^{M}\!\!\!\!\!\!\!\!\prod_{\substack{a\in\Omega_{j}\\b\in\Omega_{k}\\\tau^{-1}(a)>\tau^{-1}(b)}}\!\!\!\!\!\!\!\!\left(\frac{(q_{j}+\rmi\kappa r(a))-(q_{k}+\rmi\kappa r(b))-\rmi\kappa}{(q_{j}+\rmi\kappa r(a))-(q_{k}+\rmi\kappa r(b))+\rmi\kappa}\right)\;.
\end{eqnarray}
One notes that $\tau\in S_{n}''(\vec{n})$ is equivalent to $R(\vec{n})\circ\tau\in S_{n}'(\vec{n})$ with $R(\vec{n})$ the permutation that exchanges $a,b\in\Omega_{j}$, $j=1,\ldots,M$, if $r(a)+r(b)=n_{j}$. This implies
\begin{equation}
\widetilde{\varphi}_{\kappa}(y;M,\vec{n},q)=\overline{\varphi_{\kappa}(y;M,\vec{n},q-\rmi\kappa\vec{n}+\rmi\kappa)}\;,
\end{equation}
where $\overline{(\ldots)}$ denotes complex conjugation. Thus, one can write
\begin{eqnarray}
\label{phi[phi,phi tilde]}
&&\fl\hspace{15mm} \langle x|\rme^{-tH_{\kappa}}|y\rangle=\sum_{M=1}^{n}\frac{1}{M!(2\pi)^{M}}\sum_{\vec{n}\in D_{n,M}}\left(\int_{\mathbb{R}-\rmi\kappa\epsilon_{j}}\rmd q_{j}\right)\nonumber\\
&&\fl\hspace{20mm} \prod_{j=1}^{M}\prod_{a\in\Omega_{j}}\left(\rme^{-t(q_{j}+\rmi\kappa(r(a)-1))^{2}}\right)\varphi_{\kappa}(x;M,\vec{n},q)\overline{\varphi_{\kappa}(y;M,\vec{n},q-\rmi\kappa\vec{n}+\rmi\kappa)}\;.
\end{eqnarray}
At this point, one wants to move again the contours of integration so that the exact Bethe eigenfunctions appear. One then needs to know precisely the location of the poles of the integrand in (\ref{phi[phi,phi tilde]}). To achieve this, we prove the following lemma:
\begin{lemma}
The expression
\begin{equation}
\label{chi}
\varphi_{\kappa}(x;M,\vec{n},q)\prod_{j<k}^{M}\prod_{r=1}^{n_{j}}\prod_{s=1}^{n_{k}}\left(\frac{q_{j}-q_{k}+\rmi\kappa(r-s-1)}{q_{j}-q_{k}+\rmi\kappa(r-s)}\right)\;,
\end{equation}
with $\varphi_{\kappa}(x;M,\vec{n},q)$ defined in Eq. (\ref{phi(x)}), is holomorphic as function of the $q_{j}$'s in the whole complex plane.
\end{lemma}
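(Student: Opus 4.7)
The plan is to prove holomorphicity by a residue analysis: enumerate every candidate singular hyperplane, and show the residue of the product there vanishes, using both explicit zeros of the multiplier and a pairing involution on $S_n'(\vec{n})$.

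The first step is to simplify the multiplier. The inner product over $s$ telescopes, yielding
\begin{equation*}
\prod_{r=1}^{n_{j}}\prod_{s=1}^{n_{k}}\frac{q_{j}-q_{k}+\rmi\kappa(r-s-1)}{q_{j}-q_{k}+\rmi\kappa(r-s)}=\prod_{r=1}^{n_{j}}\frac{q_{j}-q_{k}+\rmi\kappa(r-1-n_{k})}{q_{j}-q_{k}+\rmi\kappa(r-1)}\;.
\end{equation*}
Thus the multiplier contributes $n_{j}$ simple poles at $q_{j}-q_{k}=-\rmi\kappa(r-1)$, $r=1,\ldots,n_{j}$, and $n_{j}$ simple zeros at $q_{j}-q_{k}=\rmi\kappa(n_{k}+1-r)$. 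Each individual summand of $\varphi_{\kappa}(x;M,\vec{n},q)$ has simple poles precisely at $q_{j}-q_{k}=\rmi\kappa(r(b)-r(a)+1)$ for pairs $a\in\Omega_{j}$, $b\in\Omega_{k}$, $j<k$, satisfying the ordering constraint $\sigma^{-1}(a)>\sigma^{-1}(b)$. The candidate singular hyperplanes of the full integrand are therefore a finite explicit set of the form $q_{j}=q_{k}+\rmi\kappa m$, $j<k$, with $m$ in a bounded integer range.

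The second step is an easy cancellation. For $m\in\{n_{k}-n_{j}+1,\ldots,n_{k}\}$, the multiplier carries an explicit simple zero at $q_{j}-q_{k}=\rmi\kappa m$; a case-check on $(r(a),r(b))$ with $r(b)-r(a)+1=m$ and the constraint $\sigma^{-1}(a)>\sigma^{-1}(b)$ shows that every summand of $\varphi_{\kappa}$ has at most a simple pole there, so these "top" poles are killed automatically. The remaining candidate poles fall into two classes: poles of $\varphi_{\kappa}$ at intermediate values of $m$, and the new poles at $m\in\{1-n_{j},\ldots,0\}$ introduced by the multiplier itself.

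The core technical step is then a pairing argument on $S_{n}'(\vec{n})$. For each remaining candidate hyperplane and each $\sigma$ whose summand has nonzero residue, I pick the canonical witness $(a,b)$ (the pair realising the pole with extremal position indices) and construct an involution $\sigma\mapsto\sigma^{*}$ obtained by interchanging $\sigma^{-1}(a)$ with the image of the uniquely determined companion element. On the singular hyperplane, the S-matrix factor $\frac{(q_{j}+\rmi\kappa r(a))-(q_{k}+\rmi\kappa r(b))+\rmi\kappa}{(q_{j}+\rmi\kappa r(a))-(q_{k}+\rmi\kappa r(b))-\rmi\kappa}$ degenerates to $-1$ for a neighbouring pair, and the swap flips the sign of the residue; the two contributions cancel. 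Poles from the multiplier at $m\in\{1-n_{j},\ldots,0\}$ are handled by the same mechanism but applied to the corresponding identity of the antisymmetrised sum, as is already illustrated in the small case $M=2$, $n_{1}=n_{2}=1$, where the sum $\varphi_{\kappa}$ develops a zero at $q_{1}=q_{2}$ that matches the pole of the multiplier there.

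The principal obstacle is that the pairing must preserve the restricted set $S_{n}'(\vec{n})$: a naive adjacent transposition of positions generally leaves this set because $\sigma^{-1}$ must stay increasing on each cluster $\Omega_{j}$. The correct involution must therefore swap witnesses across two different clusters $\Omega_{j},\Omega_{k}$ in a way compatible with both intra-cluster orderings, and one has to verify that all remaining rational factors transform as required. A secondary difficulty is higher-order residue bookkeeping: several pairs $(a,b)$ can contribute to the same hyperplane, so the residue is a sum of first-order contributions weighted by products of the surviving factors, and one must check that the telescoped form of the multiplier supplies exactly the multiplicities needed for the involution to achieve termwise cancellation.
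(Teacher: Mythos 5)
Your strategy---enumerate the singular hyperplanes $q_j=q_k+\rmi\kappa m$ and cancel residues by an involution on $S_n'(\vec n)$---is not the paper's route, and as written it has two genuine gaps. First, the ``easy cancellation'' step is false: it is not true that every summand of $\varphi_\kappa$ has at most a simple pole on the hyperplanes with $m\in\{n_k-n_j+1,\dots,n_k\}$. Take $n_j=n_k=2$ and $m=1$: the two pairs $(a,b)$ with $r(a)=r(b)$ both produce the denominator $q_j-q_k-\rmi\kappa$, and the constraints $\sigma^{-1}(a)>\sigma^{-1}(b)$ for both pairs are compatible with membership in $S_n'(\vec n)$ (order the four elements under $\sigma^{-1}$ as: the $r=1$ element of $\Omega_k$, the $r=1$ element of $\Omega_j$, the $r=2$ element of $\Omega_k$, the $r=2$ element of $\Omega_j$). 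That summand has a double pole at $q_j-q_k=\rmi\kappa$, and the single telescoped zero of the multiplier cannot kill it; the cancellation there already requires the sum over $\sigma$. Second, and more importantly, the involution that is supposed to handle the remaining hyperplanes is never constructed. You correctly observe that a naive transposition leaves $S_n'(\vec n)$, but that observation is exactly where the argument stops: without an explicit pairing that preserves the intra-cluster orderings and flips the sign of the residue, nothing has been proved.

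The paper removes both obstacles at once. Writing $\xi_a=q_j+\rmi\kappa(r(a)-1)$ for $a\in\Omega_j$, the product of the S-matrix factors (taken only over pairs with $\sigma^{-1}(a)>\sigma^{-1}(b)$) with the multiplier in (\ref{chi}) equals, up to a $q$-independent constant, the product over \emph{all} pairs $a<b$ of $\bigl(\xi_a-\xi_b+\rmi\kappa\,\sign(\sigma^{-1}(a)-\sigma^{-1}(b))\bigr)/(\xi_a-\xi_b)$; the intra-cluster factors included in this extended product vanish precisely when $\sigma\notin S_n'(\vec n)$, so the sum extends for free to all of $S_n$. In the independent variables $\xi_a$ every pole is then manifestly simple (one factor $1/(\xi_c-\xi_d)$ per pair), and the residue at $\xi_c=\xi_d$ of the term labelled by $\sigma$ cancels against the term with $\sigma^{-1}(c)$ and $\sigma^{-1}(d)$ exchanged---an involution on the full $S_n$, available only because the restriction to $S_n'(\vec n)$ has been lifted. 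Holomorphy in $\xi$ gives holomorphy in $q$ by composing with the affine substitution, so the collapsing of several $\xi$-hyperplanes onto one $q$-hyperplane (your ``higher-order bookkeeping'') never has to be confronted. If you want to salvage a direct residue analysis in the $q$ variables, the missing ingredient is precisely this extension of the permutation sum from $S_n'(\vec n)$ to $S_n$.
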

\begin{proof}
One defines $\chi(x;M,\vec{n},q)$, equal to (\ref{chi}) up to a global normalization independent of $q$, by
\begin{eqnarray}
&&\fl\hspace{5mm} \chi(x;M,\vec{n},q)=\sum_{\sigma\in S_{n}'(\vec{n})}\prod_{j=1}^{M}\prod_{a\in\Omega_{j}}\left(\rme^{\rmi(q_{j}+\rmi\kappa r(a)-\rmi\kappa)x_{\sigma^{-1}(a)}}\right)\nonumber\\
&&\fl\hspace{10mm} \times\prod_{j<k}^{M}\prod_{\substack{a\in\Omega_{j}\\b\in\Omega_{k}}}\left(\frac{(q_{j}+\rmi\kappa r(a))-(q_{k}+\rmi\kappa r(b))+\rmi\kappa\,\sign(\sigma^{-1}(a)-\sigma^{-1}(b))}{(q_{j}+\rmi\kappa r(a))-(q_{k}+\rmi\kappa r(b))}\right)\;.
\end{eqnarray}
Noting that the product
\begin{equation}
\fl\hspace{20mm}\prod_{\substack{a,b\in\Omega_{j}\\a<b}}\left(\frac{(q_{j}+\rmi\kappa r(a))-(q_{j}+\rmi\kappa r(b))+\rmi\kappa\,\sign(\sigma^{-1}(a)-\sigma^{-1}(b))}{(q_{j}+\rmi\kappa r(a))-(q_{j}+\rmi\kappa r(b))}\right)\;
\end{equation}
is nonzero for all $j=1,\ldots,M$ if and only if the permutation $\sigma$ is an element of $S_{n}'(\vec{n})$, one can instead sum over all permutations $\sigma\in S_{n}$. One has
\begin{eqnarray}
&&\fl\hspace{10mm} \chi(x;M,\vec{n},q)=\prod_{j=1}^{M}n_{j}!\;\sum_{\sigma\in S_{n}}\prod_{j=1}^{M}\prod_{a\in\Omega_{j}}\left(\rme^{\rmi(q_{j}+\rmi\kappa r(a)-\rmi\kappa)x_{\sigma^{-1}(a)}}\right)\nonumber\\
&&\fl\hspace{15mm} \times\prod_{j<k}^{M}\prod_{a<b}^{n}\left(\frac{(q_{j}+\rmi\kappa r(a))-(q_{k}+\rmi\kappa r(b))+\rmi\kappa\,\sign(\sigma^{-1}(a)-\sigma^{-1}(b))}{(q_{j}+\rmi\kappa r(a))-(q_{k}+\rmi\kappa r(b))}\right)\;.
\end{eqnarray}
Then, one defines $\widetilde{\chi}(x;M,\vec{n},\xi)$ by replacing everywhere in $\chi(x;M,\vec{n},q)$ the variables $q_{j}+\rmi\kappa(r(a)-1)$, $j=1,\ldots,M$, $a\in\Omega_{j}$ by $\xi_{a}$ with the result
\begin{eqnarray}
&& \widetilde{\chi}(x;M,\vec{n},\xi)=\prod_{j=1}^{M}n_{j}!\;\sum_{\sigma\in S_{n}}\prod_{a=1}^{n}\left(\rme^{\rmi \xi_{a}x_{\sigma^{-1}(a)}}\right)\nonumber\\
&&\hspace{25mm} \times\prod_{a<b}^{n}\left(\frac{\xi_{a}-\xi_{b}+\rmi\kappa\,\sign(\sigma^{-1}(a)-\sigma^{-1}(b))}{\xi_{a}-\xi_{b}}\right)\;.
\end{eqnarray}
If $\widetilde{\chi}(x;M,\vec{n},\xi)$ is analytic in the $\xi_{a}$, then $\chi(x;M,\vec{n},q)$ will also be analytic in the $q_{j}$. One notes that $\widetilde{\chi}(x;M,\vec{n},\xi)$ has only simple poles. The term with the permutation $\sigma\in S_{n}$ of the residue at $\xi_{c}=\xi_{d}$ with $c<d$ is given by
\begin{eqnarray}
&& \Bigg(\prod_{j=1}^{M}n_{j}!\Bigg)\;\rme^{\rmi \xi_{d}(x_{\sigma^{-1}(c)}+x_{\sigma^{-1}(d)})}\prod_{\substack{a=1\\a\neq c,d}}^{n}\left(\rme^{\rmi \xi_{a}x_{\sigma^{-1}(a)}}\right)\nonumber\\
&& \times\rmi\kappa\,\sign(\sigma^{-1}(c)-\sigma^{-1}(d))\nonumber\\
&& \times\prod_{\substack{a<b\\a\neq c,d\\b\neq c,d}}^{n}\left(\frac{\xi_{a}-\xi_{b}+\rmi\kappa\,\sign(\sigma^{-1}(a)-\sigma^{-1}(b))}{\xi_{a}-\xi_{b}}\right)\nonumber\\
&& \times\prod_{\substack{b=1\\b\neq c,d}}^{n}\left(\frac{\xi_{d}-\xi_{b}+\rmi\kappa\,\sign(\sigma^{-1}(c)-\sigma^{-1}(b))}{\xi_{d}-\xi_{b}}\right)\nonumber\\
&& \times\prod_{\substack{b=1\\b\neq c,d}}^{n}\left(\frac{\xi_{d}-\xi_{b}+\rmi\kappa\,\sign(\sigma^{-1}(d)-\sigma^{-1}(b))}{\xi_{d}-\xi_{b}}\right)\;.
\end{eqnarray}
This term cancels with the term for which $\sigma^{-1}(a)$ and $\sigma^{-1}(b)$ are exchanged. Thus, $\widetilde{\chi}(x;M,\vec{n},\xi)$ and $\chi(x;M,\vec{n},q)$ are analytic functions in the $\xi_{a}$ and in the $q_{j}$. This finishes the proof of the lemma.
\end{proof}
Then, since
\begin{eqnarray}
&& \prod_{r=1}^{n_{j}}\prod_{s=1}^{n_{k}}\left(\frac{q_{j}-q_{k}+\rmi\kappa(r-s-1)}{q_{j}-q_{k}+\rmi\kappa(r-s)}\;\frac{q_{j}-q_{k}+\rmi\kappa(n_{j}-n_{k}-r+s+1)}{q_{j}-q_{k}+\rmi\kappa(n_{j}-n_{k}-r+s)}\right)\nonumber\\
&& =\frac{(q_{j}-q_{k}-\rmi\kappa n_{k})(q_{j}-q_{k}+\rmi\kappa n_{j})}{(q_{j}-q_{k})(q_{j}-q_{k}+\rmi\kappa(n_{j}-n_{k}))}\;,
\end{eqnarray}
the integrand in (\ref{phi[phi,phi tilde]}) is analytic in the $q_{j}$ except for simple poles at $q_{j}=q_{k}+\rmi\kappa n_{k}$ and $q_{k}=q_{j}+\rmi\kappa n_{j}$ for $1\leq j<k\leq M$. At this point, it is possible to set $\epsilon_{j}=0$ for all $j=1,\ldots,M$ in (\ref{phi[phi,phi tilde]}) since the poles at $q_{j}=q_{k}$ cancel. Then, the contours of integration can be moved again, from $q_{j}\in\mathbb{R}$ to $q_{j}\in\mathbb{R}+\rmi\mu_{j}$. Before moving the contours, one has (with $\kappa>0$ and $j<k$)
\begin{equation}
\fl \Im(q_{j}-q_{k}-\rmi\kappa n_{k})=-\kappa n_{k}<0\qquad\text{and}\qquad\Im(q_{k}-q_{j}-\rmi\kappa n_{j})=-\kappa n_{j}<0\;.
\end{equation}
After moving the contours, these inequalities must still be valid. One notes that if the $\mu_{j}$'s satisfy the constraint
\begin{equation}
\label{condition mu}
-n_{j}<\mu_{j}\leq0\;,
\end{equation}
both inequalities are satisfied. Shifting the contours and making then the change of variables $q_{j}\to q_{j}+\rmi\kappa\mu_{j}$ in (\ref{phi(sigma,tau)}) leads to the result (\ref{phi(kappa>0)}) of Theorem \ref{Theorem propagator 1}, while shifting the contours and making the change of variables in (\ref{phi[phi,phi tilde]}) leads to
\begin{eqnarray}
\label{phi(mu)}
&&\fl \langle x|\rme^{-tH_{\kappa}}|y\rangle=\sum_{M=1}^{n}\frac{1}{M!(2\pi)^{M}}\sum_{\vec{n}\in D_{n,M}}\int_{\mathbb{R}^{M}}\rmd q_{1}\,\ldots\,\rmd q_{M}\,\prod_{j=1}^{M}\prod_{a\in\Omega_{j}}\left(\rme^{-t(q_{j}+\rmi\kappa(\mu_{j}+r(a)-1))^{2}}\right)\nonumber\\
&&\fl\hspace{25mm} \times\varphi_{\kappa}(x;M,\vec{n},q+\rmi\kappa\vec{\mu})\;\overline{\varphi_{\kappa}(y;M,\vec{n},q+\rmi\kappa\vec{\mu}+\rmi\kappa\vec{n}-\rmi\kappa)}\;.
\end{eqnarray}
One notes that
\begin{eqnarray}
&&\fl \sum_{a\in\Omega_{j}}(q_{j}+\rmi\kappa(\mu_{j}+r(a)-1))^{2}=n_{j}q_{j}^{2}+2\rmi\kappa q_{j}\sum_{r=1}^{n_{j}}(\mu_{j}+r(a)-1)-\kappa^{2}\sum_{r=1}^{n_{j}}(\mu_{j}+r-1)^{2}\nonumber\\
&&\fl =n_{j}q_{j}^{2}+2\rmi\kappa q_{j}n_{j}\left(\mu_{j}+\frac{(n_{j}-1)}{2}\right)-\frac{\kappa^{2}}{12}(n_{j}^{3}-n_{j})-\kappa^{2}n_{j}\left(\mu_{j}+\frac{(n_{j}-1)}{2}\right)^{2}\;.
\end{eqnarray}
The choice $\mu_{j}=-(n_{j}-1)/2$ is thus the only one such that the argument of the exponential in (\ref{phi(mu)}) is a real number. It is also the only choice such that $\varphi_{\kappa}(x;M,\vec{n},q+\rmi\kappa\vec{\mu})$ and $\overline{\varphi_{\kappa}(y;M,\vec{n},q+\rmi\kappa\vec{\mu}+\rmi\kappa\vec{n}-\rmi\kappa)}$ are complex conjugates of each other. Introducing the Bethe eigenfunctions \cite{D10.2}
\begin{equation}
\label{Bethe eigenfunction}
\psi(x;M,\vec{n},q)=\varphi_{\kappa}\left(x;M,\vec{n},q-\frac{\rmi\kappa}{2}\vec{n}+\frac{\rmi\kappa}{2}\right)\;,
\end{equation}
one finds for $\mu_{j}=-(n_{j}-1)/2$
\begin{eqnarray}
\label{phi[phi]}
&& \langle x|\rme^{-tH_{\kappa}}|y\rangle=\sum_{M=1}^{n}\frac{1}{M!(2\pi)^{M}}\sum_{\vec{n}\in D_{n,M}}\int_{\mathbb{R}^{M}}\rmd q_{1}\,\ldots\,\rmd q_{M}\,\nonumber\\
&&\hspace{20mm} \psi(x;M,\vec{n},q)\;\overline{\psi(y;M,\vec{n},q)}\;\prod_{j=1}^{M}\rme^{-tn_{j}q_{j}^{2}+\frac{t\kappa^{2}}{12}(n_{j}^{3}-n_{j})}\;.
\end{eqnarray}
This is the result (\ref{phi(kappa>0,Bethe)}) of Theorem \ref{Theorem propagator 2}.


\section{The special case \texorpdfstring{$x=y=0$}{x=y=0}}
\label{Section x=y=0}
\setcounter{equation}{0}
In the special case $x=y=0$, Eq. (\ref{phi[phi]}) simplifies. One has
\begin{eqnarray}
\label{prefactor x=y=0 1}
&&\fl \sum_{\sigma\in S_{n}'(\vec{n})}\prod_{j<k}^{M}\!\!\!\!\!\!\!\!\prod_{\substack{a\in\Omega_{j}\\b\in\Omega_{k}\\\sigma^{-1}(a)>\sigma^{-1}(b)}}\!\!\!\!\!\!\!\!\left(\frac{(q_{j}+\rmi\kappa(\mu_{j}+r(a)))-(q_{k}+\rmi\kappa(\mu_{k}+r(b)))+\rmi\kappa}{(q_{j}+\rmi\kappa(\mu_{j}+r(a)))-(q_{k}+\rmi\kappa(\mu_{k}+r(b)))-\rmi\kappa}\right)\\
&&\fl =\frac{\sum\limits_{\sigma\in S_{n}'(\vec{n})}\prod\limits_{j<k}^{M}\prod\limits_{\substack{a\in\Omega_{j}\\b\in\Omega_{k}}}\Big((q_{j}-q_{k})+\rmi\kappa(\mu_{j}-\mu_{k}+r(a)-r(b)+\sign(\sigma^{-1}(a)-\sigma^{-1}(b)))\Big)}{\prod\limits_{j<k}^{M}\prod\limits_{r=1}^{n_{j}}\prod\limits_{s=1}^{n_{k}}\Big((q_{j}-q_{k})+\rmi\kappa(\mu_{j}-\mu_{k}+r-s-1)\Big)}\;.\nonumber
\end{eqnarray}
As in the previous section, one can sum instead over all permutations $\sigma\in S_{n}$ in case the product in the numerator becomes a product over all $a<b$: the permutations which do not belong to $S_{n}'(\vec{n})$ give zero contribution. We use the notation $\xi_{a}=q_{j}+\rmi\kappa(\mu_{j}+r(a))$ for $a\in\Omega_{j}$. Since the signature of the permutation $\sigma$ can be written
\begin{equation}
\sign(\sigma)=\prod_{a<b}^{n}\frac{\xi_{a}-\xi_{b}+\rmi\kappa\,\sign(\sigma^{-1}(a)-\sigma^{-1}(b))}{\xi_{\sigma(a)}-\xi_{\sigma(b)}+\rmi\kappa\,\sign(a-b)}\;,
\end{equation}
Eq. (\ref{prefactor x=y=0 1}) rewrites as
\begin{equation}
\label{prefactor x=y=0 2}
\fl\hspace{5mm} (-\rmi\kappa)^{\frac{n_{j}(n_{j}-1)}{2}}\left(\prod_{j=1}^{M}\prod_{r=1}^{n_{j}}r!\right)\frac{\sum\limits_{\sigma\in S_{n}}\sign(\sigma)\prod\limits_{a<b}^{n}\Big(\xi_{\sigma(a)}-\xi_{\sigma(b)}+\rmi\kappa\,\sign(a-b)\Big)}{\prod\limits_{j<k}^{M}\prod\limits_{r=1}^{n_{j}}\prod\limits_{s=1}^{n_{k}}\Big((q_{j}-q_{k})+\rmi\kappa(\mu_{j}-\mu_{k}+r-s-1)\Big)}\;.
\end{equation}
One has (\textit{e.g.} see \cite{PS11.1}, Lemma 1 for a proof)
\begin{lemma}
Let $f(a,b)$ be arbitrary complex coefficients. Then
\begin{equation}
\sum_{\sigma\in S_{n}}\sign(\sigma)\prod_{a<b}^{n}\big(\xi_{\sigma(a)}-\xi_{\sigma(b)}+f(a,b)\big)=n!\prod_{a<b}^{n}(\xi_{a}-\xi_{b})\;.
\end{equation}
\end{lemma}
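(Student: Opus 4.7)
The plan is to regard the LHS as a polynomial in $\xi_{1},\ldots,\xi_{n}$ and exploit antisymmetry together with a degree count, so that only the ``$f$-free'' part of the expanded product survives.

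First, I would expand each factor by multilinearity:
\[
\prod_{a<b}^{n}\bigl(\xi_{\sigma(a)}-\xi_{\sigma(b)}+f(a,b)\bigr)=\sum_{S}\Bigl(\prod_{(a,b)\in S}f(a,b)\Bigr)\!\!\prod_{(a,b)\notin S}\!\!(\xi_{\sigma(a)}-\xi_{\sigma(b)}),
\]
where $S$ ranges over subsets of $\{(a,b):1\le a<b\le n\}$. Substituting into the LHS and pulling the $f$-factors out of the sum over $\sigma$ rewrites it as
\[
\sum_{S}\Bigl(\prod_{(a,b)\in S}f(a,b)\Bigr)g_{S}(\xi),\qquad g_{S}(\xi):=\sum_{\sigma\in S_{n}}\sign(\sigma)\!\!\prod_{(a,b)\notin S}\!\!(\xi_{\sigma(a)}-\xi_{\sigma(b)}).
\]
Since the coefficients $f(a,b)$ are arbitrary, it suffices to prove that $g_{S}\equiv0$ for every nonempty $S$, and to evaluate $g_{\emptyset}$.

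The central step is to show that each $g_{S}$ is antisymmetric in $\xi_{1},\ldots,\xi_{n}$. For any transposition $(i,j)$, the reindexing $\sigma\mapsto(i,j)\circ\sigma$ is a bijection of $S_{n}$ which simultaneously swaps $\xi_{i}\leftrightarrow\xi_{j}$ inside every factor $\xi_{\sigma(a)}-\xi_{\sigma(b)}$ and flips $\sign(\sigma)$. Hence $g_{S}(\ldots,\xi_{j},\ldots,\xi_{i},\ldots)=-g_{S}(\ldots,\xi_{i},\ldots,\xi_{j},\ldots)$. Every alternating polynomial in $n$ variables is divisible by the Vandermonde $V(\xi):=\prod_{a<b}(\xi_{a}-\xi_{b})$, which has total degree $n(n-1)/2$. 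Since $g_{S}$ is a polynomial of degree at most $n(n-1)/2-|S|$, the quotient $g_{S}/V$ would have strictly negative degree whenever $S\neq\emptyset$, forcing $g_{S}\equiv0$.

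For the surviving term $S=\emptyset$, the elementary identity $\prod_{a<b}(\xi_{\sigma(a)}-\xi_{\sigma(b)})=\sign(\sigma)\,V(\xi)$ gives $g_{\emptyset}(\xi)=\sum_{\sigma\in S_{n}}\sign(\sigma)^{2}\,V(\xi)=n!\,V(\xi)$, which is exactly the right-hand side of the lemma. The one genuine technical point is the relabeling argument that establishes the antisymmetry of $g_{S}$; once that is in place, the degree count and the $S=\emptyset$ evaluation are immediate.
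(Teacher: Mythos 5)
Your proof is correct and complete. The paper itself does not prove this lemma (it defers to Lemma~1 of \cite{PS11.1}), so there is no in-text argument to compare against; your route --- expand by multilinearity in the $f(a,b)$, observe that each coefficient $g_{S}$ is alternating under $\sigma\mapsto(i,j)\circ\sigma$ and hence divisible by the Vandermonde, and kill every nonempty $S$ by the degree count $\deg g_{S}\leq n(n-1)/2-|S|$ --- is the standard argument and all its steps check out, including the crucial observation that the $f(a,b)$ are attached to the slot indices $(a,b)$ rather than to $(\sigma(a),\sigma(b))$ and so factor out of the sum over $\sigma$, and the sign identity $\prod_{a<b}(\xi_{\sigma(a)}-\xi_{\sigma(b)})=\sign(\sigma)\prod_{a<b}(\xi_{a}-\xi_{b})$ used for $S=\emptyset$.
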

Using this lemma, (\ref{prefactor x=y=0 2}) rewrites
\begin{eqnarray}
&&\fl\hspace{15mm} n!(-\rmi\kappa)^{-\frac{n_{j}(n_{j}-1)}{2}}\left(\prod_{j=1}^{M}\prod_{r=1}^{n_{j}}\frac{1}{r!}\right)\frac{\prod\limits_{a<b}^{n}(\xi_{a}-\xi_{b})}{\prod\limits_{j<k}^{M}\prod\limits_{r=1}^{n_{j}}\prod\limits_{s=1}^{n_{k}}\Big((q_{j}-q_{k})+\rmi\kappa(\mu_{j}-\mu_{k}+r-s-1)\Big)}\nonumber\\
&&\fl\hspace{15mm} =n!\left(\prod_{j=1}^{M}\frac{1}{n_{j}!}\right)\prod\limits_{j<k}^{M}\prod\limits_{r=1}^{n_{j}}\prod\limits_{s=1}^{n_{k}}\left(\frac{(q_{j}-q_{k})+\rmi\kappa(\mu_{j}-\mu_{k}+r-s)}{(q_{j}-q_{k})+\rmi\kappa(\mu_{j}-\mu_{k}+r-s-1)}\right)\;.
\end{eqnarray}
Then using (see \cite{D10.2}, Eq. (B.55-B.58) and Eq. (34))
\begin{eqnarray}
&& \prod_{j<k}^{M}\prod_{r=1}^{n_{j}}\prod_{s=1}^{n_{k}}\left|\frac{(q_{j}-q_{k})+\rmi\kappa(-\frac{n_{j}}{2}+\frac{n_{k}}{2}+r-s)}{(q_{j}-q_{k})+\rmi\kappa(-\frac{n_{j}}{2}+\frac{n_{k}}{2}+r-s-1)}\right|^{2}\nonumber\\
&& =\prod_{j<k}^{M}\left|\frac{(q_{j}-q_{k})-\frac{\rmi\kappa}{2}(n_{j}-n_{k})}{(q_{j}-q_{k})-\frac{\rmi\kappa}{2}(n_{j}+n_{k})}\right|^{2}\nonumber\\
&& =\det\left(\frac{\rmi\kappa n_{j}}{(q_{j}-q_{k})+\frac{\rmi\kappa}{2}(n_{j}+n_{k})}\right)_{j,k=1,\ldots,M}\;,
\end{eqnarray}
one finds
\begin{eqnarray}
\label{phi(0,0)}
&& \langle0|\rme^{-tH_{\kappa}}|0\rangle=\sum_{M=1}^{n}\frac{n!\kappa^{n}}{M!(2\pi)^{M}}\sum_{\vec{n}\in D_{n,M}}\int_{\mathbb{R}^{M}}\rmd q_{1}\,\ldots\,\rmd q_{M}\,\nonumber\\
&&\hspace{20mm} \det\left(\frac{\rme^{-tn_{j}q_{j}^{2}+\frac{t\kappa^{2}}{12}(n_{j}^{3}-n_{j})}}{-\rmi(q_{j}-q_{k})+\frac{\kappa}{2}(n_{j}+n_{k})}\right)_{j,k=1,\ldots,M}\;.
\end{eqnarray}

In the replica computations for the KPZ equation with sharp wedge initial data, Eq. (\ref{phi(0,0)}) can be used to arrive directly at the generating function of the height fluctuations, see Eq. (2.17)-(2.19) of \cite{PS11.1}.


\section{Conclusions}
Exact formulas for the transition probability of the one-dimensional asymmetric simple exclusion process, a non-equilibrium exactly solvable model, have been derived a few years ago \cite{S97.1,P03.1,TW08.1}. Subsequently the method was adapted to obtain an exact formula for the propagator of the quantum $\delta$-Bose gas with repulsive interaction \cite{TW08.2}. Here we analytically continued this formula to the case of attractive interaction.

An advantage of our approach, compared to the usual Bethe ansatz, is that the question of the completeness of the Bethe eigenfunctions can be completely avoided. In fact, such kind of exact expressions for the propagator can be used to \textit{prove} the completeness of the Bethe ansatz, at least on the infinite line. In principle, it might also be possible to use the same kind of approach for the case of periodic boundary conditions, see \textit{e.g.} \cite{P03.1,PP07.1}.

The exact expression for the propagator of the repulsive $\delta$-Bose gas, derived in \cite{TW08.2} and used here, bears some formal similarity with the coordinate Bethe ansatz, which is the original ansatz introduced by Bethe to diagonalize the Hamiltonian of Heisenberg spin chain. Since then, other descriptions of eigenstates have been developed, in particular algebraic Bethe ansatz, which makes clearer the structures underlying the quantum integrability of such type of models. It would be of interest to understand whether it is possible to write down the propagator using an approach closer to the algebraic Bethe ansatz.

\subsection*{Acknowledgments}
We thank Tomohiro Sasamoto for most instructive discussions. S.P. acknowledges the support through a DFG research project (SP181-24).


\begin{appendix}
\section{Analyticity in the coupling of the propagator of the $\delta$-Bose gas}
\label{Appendix analyticity propagator}
\setcounter{equation}{0}
In the case of a standard Schr\"odinger operator of the form $-\Delta+\lambda V$ one can use Kato's theory to establish that $\rme^{-t(-\Delta+\lambda V)}$ is analytic in $\lambda$. The $\delta$-potential corresponds to a boundary condition and we are not aware of a functional analytic argument for the holomorphic dependence on $\kappa$. Instead, we will use the Feynman-Kac representation.
\begin{proposition}
For fixed $x,y\in\mathbb{R}^{n}$, $t>0$, the function $\kappa\mapsto\langle x|\rme^{-tH_{\kappa}}|y\rangle$ is holomorphic on $\mathbb{C}$.
\end{proposition}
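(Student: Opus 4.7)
My plan is to derive the propagator from a Feynman--Kac representation and then observe that the resulting exponential functional of the intersection local times has Gaussian tails, which makes the power series in $\kappa$ converge on all of $\mathbb{C}$.

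The first step is to establish that for $x,y\in\mathbb{R}^n$ and $t>0$
\begin{equation*}
\langle x|\rme^{-tH_{\kappa}}|y\rangle = \frac{1}{n!}\sum_{\sigma\in S_{n}} p_{t}(x-\sigma y)\, \mathbb{E}^{t}_{x,\sigma y}\!\left[\rme^{2\kappa L(t)}\right],
\end{equation*}
where $p_{t}(z)=(4\pi t)^{-n/2}\rme^{-|z|^{2}/(4t)}$ is the free $n$-particle heat kernel, $\mathbb{E}^{t}_{x,z}$ denotes expectation under the $n$-dimensional Brownian bridge from $x$ to $z$ on $[0,t]$, and
\begin{equation*}
L(t)=\sum_{1\leq j<k\leq n} L_{jk}(t)
\end{equation*}
is the total pairwise intersection local time, with $L_{jk}(t)$ the local time at $0$ of the one-dimensional process $B^{(j)}_{s}-B^{(k)}_{s}$. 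The symmetrization over $\sigma$ corresponds to the bosonic projection built into $H_{\kappa}$. The formula is justified by smoothing: replacing $\delta$ by an approximate identity $\eta_{\varepsilon}$, one has the standard Feynman--Kac formula for the smooth Hamiltonian $H^{\varepsilon}_{\kappa}=-\Delta-2\kappa\sum_{j<k}\eta_{\varepsilon}(x_{j}-x_{k})$, and one passes to the limit using the $L^{p}$-convergence $\int_{0}^{t}\eta_{\varepsilon}(B^{(j)}_{s}-B^{(k)}_{s})\,ds\to L_{jk}(t)$ combined with strong resolvent convergence $H^{\varepsilon}_{\kappa}\to H_{\kappa}$.

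The second step controls the tails of $L(t)$. Each difference $B^{(j)}_{s}-B^{(k)}_{s}$ is, after time-change, a one-dimensional Brownian bridge, so by L\'evy's identity its local time at $0$ is dominated in distribution by the absolute value of a centered Gaussian with variance of order $t$. Hence each $L_{jk}(t)$ has Gaussian tails, and since $L(t)$ is a finite sum of nonnegative variables with this property, $\mathbb{P}(L(t)>u)\leq C_{1}\rme^{-c_{2}u^{2}}$. In particular $\mathbb{E}^{t}_{x,\sigma y}[\rme^{\alpha L(t)}]<\infty$ for every real $\alpha$ and $\mathbb{E}^{t}_{x,\sigma y}[L(t)^{k}]\leq C^{k}(k!)^{1/2}$ with $C$ independent of $k$. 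Expanding
\begin{equation*}
\mathbb{E}^{t}_{x,\sigma y}\!\left[\rme^{2\kappa L(t)}\right]=\sum_{k=0}^{\infty}\frac{(2\kappa)^{k}}{k!}\,\mathbb{E}^{t}_{x,\sigma y}\!\left[L(t)^{k}\right],
\end{equation*}
the moment bound gives terms dominated by $(2C|\kappa|)^{k}/(k!)^{1/2}$, so the series has infinite radius of convergence and defines an entire function of $\kappa$. The finite sum over $\sigma$ preserves entirety and proves the proposition.

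The main obstacle is the rigorous justification of the Feynman--Kac representation: because $\delta$ is a distribution rather than a function, Kato's standard perturbation theory does not apply directly and the smoothing limit has to be handled by hand. In particular one needs $\mathbb{E}^{t}_{x,z}[\exp(2\kappa\int_{0}^{t}\eta_{\varepsilon}(B^{(j)}_{s}-B^{(k)}_{s})\,ds)]\to\mathbb{E}^{t}_{x,z}[\exp(2\kappa L_{jk}(t))]$ locally uniformly in $\kappa\in\mathbb{C}$, so that Vitali's theorem transfers analyticity from the smoothed propagators (which are entire in $\kappa$ by Kato) to the limit. All other steps are routine: once the representation is established, the entire-function conclusion follows immediately from the Gaussian moment bound, reducing the analytic question to a standard probabilistic estimate for local times.
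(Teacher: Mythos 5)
Your proposal is correct and follows essentially the same route as the paper's own proof: a Feynman--Kac representation of the propagator in terms of the pairwise intersection local times, followed by the observation that the local time has Gaussian tails, so that all exponential moments are finite and $\kappa\mapsto\mathbb{E}\bigl(\rme^{2\kappa L(t)}\bigr)$ is entire. You supply more detail than the paper does (the smoothing justification of Feynman--Kac, the explicit moment bound, and the Vitali argument), whereas the paper reaches the finiteness of exponential moments via a convexity bound $\rme^{2|\kappa|X(t)}\leq$ a sum of single-pair terms; these are interchangeable technical routes to the same estimate.
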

\begin{proof}
By the Feynman-Kac formula one has the representation
\begin{equation}
\langle x|\rme^{-tH_{\kappa}}|y\rangle=\mathbb{E}_{x,y}\left(\rme^{2\kappa X(t)}\right)p_{t}(x-y)\;.
\end{equation}
Here $p_{t}(x-y)$ is the Brownian motion transition kernel. The expectation $\mathbb{E}_{x,y}$ is over the standard Brownian bridge, $b(t)$, starting at $x$ and ending at $y$ at time $t$. Let $L_{j,k}(t)$, $j<k$, be the local time at $0$ for $\{b_{i}(s)-b_{j}(s),0\leq s\leq t\}$, \textit{i.e.}
\begin{equation}
L_{j,k}(t)=\int_{0}^{t}\rmd s\;\delta(b_{j}(s)-b_{k}(s))\;.
\end{equation}
Then
\begin{equation}
X(t)=\sum_{j<k}^{n}L_{j,k}(t)\;.
\end{equation}
Since $X(t)\geq0$, we bound as
\begin{equation}
\label{bound expX}
\mathbb{E}_{x,y}\left(\rme^{2\kappa X(t)}\right)\leq\mathbb{E}_{x,y}\left(\rme^{2|\kappa|X(t)}\right)\leq\sum_{j<k}^{n}\mathbb{E}_{x,y}\left(\rme^{|\kappa|n^{2}L_{j,k}(t)}\right)\;.
\end{equation}
The second inequality follows from the convexity of the exponential function. $L_{j,k}(t)$ is the local time at $0$ for a one-dimensional Brownian bridge starting at $x_{j}-x_{k}$ and ending at $y_{j}-y_{k}$ at time $t$. The distribution of $L_{j,k}(t)$ has a Gaussian decay at infinity. Hence for any $|\kappa|$ the right hand side in (\ref{bound expX}) is bounded. This proves analyticity on $\mathbb{C}$.
\end{proof}
\end{appendix}


\section*{References}

\end{document}